\tikzstyle{vertex} = [circle,fill=black!0,minimum size=3pt,inner sep=0pt]
\definecolor{Blueish}{rgb} {0,0,0.55}
\definecolor{Redish}{rgb} {0.95,0.4,0.3}
\definecolor{Greenish}{rgb} {0.6,1,0.6}
\tikzstyle{part} = [line width=4pt,opacity=.5,cap=round,join=round,to path={(\tikztostart.center)--(\tikztotarget.center)\tikztonodes}]
\newcommand{\netgraph}{
    \node[draw,minimum size=.75cm,regular polygon,regular polygon sides=3] (triangle) {};      
    \foreach \x/\lab in {1/b,2/c,3/d} 
        \node[vertex,draw] at (triangle.corner \x) (\x) {};  

    \foreach [count=\i from 1] \angle/\name in {90/11,225/22,-45/33}
        \draw (\i) -- ++(\angle:.5cm) node [vertex,draw] (\name) {};
}
\newcommand{\diamondgraph}[1][0]{
    \node[draw,minimum size=.75cm,regular polygon,regular polygon sides=4,rotate=#1] (square) {};    
    \draw (square.corner 1) -- (square.corner 3);
    \foreach \x/\lab in {1/b,2/c,3/d,4/a} 
        \node[vertex,draw] at (square.corner \x) (\lab) {};    
}
\newcommand{\diamondbeforepartition}{
    
    \begin{scope}
        \diamondgraph[45];
    \end{scope}
    
    \node at (b) [label=above:{$v$}]{};
    \node at (c) [label=above:{$u$}]{};
    \node at (d) [label=below:{$w$}]{};
    \node at (a) [label=above:{$x$}]{};

     \draw (c) -- ++(180:1.5cm) node [vertex,draw,label=left:$u'$] (5) {};
     \draw (a) -- ++(0:1.5cm) node [vertex,draw,label=right:$x'$] (6) {};

    \draw[gray,very thick,line width=2pt,opacity=.3] (5) -- ++(120:.5cm)  node {};
    \draw[gray,very thick,line width=2pt,opacity=.3] (5) -- ++(240:.5cm)  node  {};

    \draw[gray,very thick,line width=2pt,opacity=.3] (6) -- ++(60:.5cm)  node {};
    \draw[gray,very thick,line width=2pt,opacity=.3] (6) -- ++(-60:.5cm)  node  {};
}
\newcommand{\cofish}{
    \foreach \angle/\name in {0/a,90/b,180/c,270/d}
        \path (\angle:1cm) node [vertex,draw] (\name) {};
    
    \draw (a) -- (b) -- (c) -- (d) -- (a);
    \draw (b) -- node[draw,vertex] (e) {} (d);
    \draw (c) -- (e);
    \draw (a) -- ++(0:1cm) node[vertex,draw] (f) {};
}
\newcommand{\trianglefreelobe}{
    \draw (0,0) circle (1cm);
    \path (90:1cm) node [vertex,draw] (a) {};
    \path (135:1cm) node [vertex,draw] (b) {};
    \path (-135:1cm) node [vertex,draw] (c) {};
    \path (-45:1cm) node [vertex,draw] (d) {};
    \path ( 45:1cm) node [vertex,draw] (e) {};

    \draw (0,0) circle (.44cm);
    \foreach [count=\i from 1] \angle in {45,135,-45,-135}
        \path (\angle:.44cm) node [vertex,draw] (\i) {};

    \draw (1) -- (e);
    \draw (3) -- (d);
    \draw (2) -- (b);
    \draw (4) -- (c);
    \draw (a) -- ++(90:1cm) node[vertex,draw] (f) {};
}
\newcommand{\varandclausegadget}{
    
    \node[draw,minimum size=.75cm,regular polygon,regular polygon sides=5] (C5below) {};

    \draw[densely dotted] (C5below.corner 1) -- ++(90:.5cm)  node[vertex,draw,label=above:$x_i$,solid] (u) {};
    \draw[densely dotted] (u) -- ++(150:.5cm)  node[vertex,draw,solid] (v){};
    \draw[densely dotted] (u) -- ++(30:.5cm)  node[vertex,draw,solid] (w){};
    \draw[gray,very thick,line width=2pt,opacity=.3] (v) -- ++(45:.25cm)  node {};
    \draw[gray,very thick,line width=2pt,opacity=.3] (v) -- ++(135:.25cm)  node {};
    \draw[gray,very thick,line width=2pt,opacity=.3] (w) -- ++(45:.25cm)  node {};
    \draw[gray,very thick,line width=2pt,opacity=.3] (w) -- ++(135:.25cm)  node {};

    \draw[densely dotted] (C5below.corner 3) -- ++(225:.5cm)  node[vertex,draw,label=south west:$x_j$,solid] (u1) {};
    \draw[densely dotted] (u1) -- ++(180:.5cm)  node[vertex,draw,solid] (v1){};
    \draw[densely dotted] (u1) -- ++(270:.5cm)  node[vertex,draw,solid] (w1){};
    \draw[gray,very thick,line width=2pt,opacity=.3] (v1) -- ++(135:.25cm)  node {};
    \draw[gray,very thick,line width=2pt,opacity=.3] (v1) -- ++(-135:.25cm)  node {};
    \draw[gray,very thick,line width=2pt,opacity=.3] (w1) -- ++(-45:.25cm)  node {};
    \draw[gray,very thick,line width=2pt,opacity=.3] (w1) -- ++(-135:.25cm)  node {};

    \draw[densely dotted] (C5below.corner 4) -- ++(-45:.5cm)  node[vertex,draw,label=south east:$x_k$,solid] (u2) {};
    \draw[densely dotted] (u2) -- ++(0:.5cm)  node[vertex,draw,solid] (v2){};
    \draw[densely dotted] (u2) -- ++(270:.5cm)  node[vertex,draw,solid] (w2){};
    \draw[gray,very thick,line width=2pt,opacity=.3] (v2) -- ++(45:.25cm)  node {};
    \draw[gray,very thick,line width=2pt,opacity=.3] (v2) -- ++(-45:.25cm)  node {};
    \draw[gray,very thick,line width=2pt,opacity=.3] (w2) -- ++(-45:.25cm)  node {};
    \draw[gray,very thick,line width=2pt,opacity=.3] (w2) -- ++(-135:.25cm)  node {};

    \foreach \x in {1,2,...,5} 
      \node[vertex,draw] at (C5below.corner \x) {};
}
\newcommand{\sprimedecomp}{$S'$-de\-com\-po\-si\-tion}
\newcommand{\clawdecomp}{$\{K_{1,3}\}$-de\-com\-po\-si\-tion}
\newcommand{\clawtriangledecomp}{$\{K_{1,3}$, $K_3\}$-de\-com\-po\-si\-tion}
\newcommand{\clawpathdecomp}{$\{K_{1,3}$, $P_4\}$-de\-com\-po\-si\-tion}
\newcommand{\clawtrianglepathdecomp}{$\{K_{1,3}$, $K_3$, $P_4\}$-de\-com\-po\-si\-tion}
\newcommand{\pathtriangledecomp}{$\{K_3$, $P_4\}$-de\-com\-po\-si\-tion}
\newcommand{\almostsubcubic}{degree-2,3}
\newlist{aims}{itemize}{1}
\setlist[aims,1]{
  label={Question~\arabic*}
  ,leftmargin=*
  ,align=left
  ,   labelsep=0mm
}
\newcommand{\decisionproblem}[3]{
\begin{center}
\noindent\fbox{\parbox{33em}{
\begin{minipage}[t]{1\linewidth}
{\sc #1}
\begin{aims}
    \item[Input:] {#2}
    \item[Question:] {#3}
\end{aims}    
\end{minipage}
}
}    
\end{center}
}
\newtheorem{observation}{Observation}
\Crefname{observation}{Observation}{Observations}
\title{Decomposing Cubic Graphs into Connected Subgraphs of Size Three}
\author{
Laurent Bulteau\inst{1}
\and
Guillaume Fertin\inst{2}
\and 
Anthony Labarre\inst{1}
\and
Romeo Rizzi\inst{3}
\and
Irena Rusu\inst{2}}
\institute{
Université Paris-Est, LIGM (UMR 8049), CNRS, ENPC, ESIEE Paris, UPEM, F-77454, Marne-la-Vallée, France\\
\and Laboratoire d'Informatique de Nantes-Atlantique,
  UMR CNRS 6241, Universit\'e de Nantes, 2 rue de la Houssini\`ere, 44322
  Nantes Cedex 3, France\\
\and Department of Computer Science, University of Verona, Italy}
\begin{document}

\maketitle

\begin{abstract}
Let $S=\{K_{1,3},K_3,P_4\}$ be the set of connected graphs of size 3. We study 
the problem of partitioning the edge set of a graph $G$ into graphs taken from 
any non-empty $S'\subseteq S$. The problem is known to be \NP-complete for any 
possible choice of $S'$ in general graphs. In this paper, we assume that the 
input graph is cubic, and study the computational complexity of the problem of 
partitioning its edge set for any choice of $S'$. We identify all polynomial 
and \NP-complete problems in that setting, and give graph-theoretic 
characterisations of $S'$-decomposable cubic graphs in some cases.
\end{abstract}

\thispagestyle{plain}  
\pagestyle{plain} 

\section{Introduction} 

\paragraph{General context.}
Given a connected graph $G$ and a set $S$ of graphs, the {\sc $S$-decomposition}
problem asks whether $G$ can be represented as an edge-disjoint union
of subgraphs, each of which is isomorphic to a graph in $S$. The problem has a long
history that can be traced back to~\citet{Kirkman1847} and has been
intensively studied ever since, both from pure mathematical and
algorithmic point of views. One of the most notable results in the
area is the proof by~\citet{Dor1997} of the long-standing ``Holyer 
conjecture''~\cite{Holyer1981}, which stated that the {\sc
  $S$-decomposition} problem is \NP-complete when $S$ contains a
single graph with at least three edges. 

Many variants of the {\sc $S$-decomposition} problem have been studied
while attempting to prove Holyer's conjecture or to obtain
polynomial-time algorithms in restricted cases~\cite{Yuster200712},
and applications arise in such diverse fields as traffic
grooming~\cite{Munoz2011} and graph drawing~\cite{Fusy20091870}. 
In particular, \citet{Dyer1985} studied a variant where $S$ is the set of connected graphs
with $k$ edges for some natural $k$, and 
proved the \NP-completeness of the {\sc $S$-decomposition} problem for any $k\geq 3$, even under the assumption that
the input graph is planar and bipartite 
(see Theorem~3.1 in~\cite{Dyer1985}). They
further claimed 
that the problem remains
\NP-complete under the additional constraint that all vertices of the
input graph have degree either $2$ or $3$. Interestingly, if one looks
at the special case where $k=3$ and $G$ is a bipartite {\em cubic} graph (i.e., each vertex has degree
$3$), then $G$ can clearly be decomposed in polynomial time, using $K_{1,3}$'s only,
by selecting either part of the bipartition and making each vertex in that set the
center of a $K_{1,3}$. This shows that focusing on
the case $k=3$ and on cubic graphs can lead to tractable results --- 
as opposed to general graphs, for which when $k=3$, and for any non empty $S'\subseteq
S$, the {\sc $S'$-decomposition} problems all turn out to be
\NP-complete~\cite{Dyer1985,Holyer1981}. 

In this paper, we study the {\sc $S$-decomposition} problem on cubic graphs in the case $k=3$ --- i.e., $S=\{K_{1,3},K_3,P_4\}$. 
For any non-empty
$S'\subseteq S$, we settle the computational complexity of the
{\sc $S'$-decomposition} problem by showing that the problem is
\NP-complete when $S'=\{K_{1,3},P_4\}$ and $S'=S$, while all the
other cases are in \P.                          
\Cref{tab:summary-complexity-results} summarises the state of
knowledge regarding the complexity of decomposing cubic and arbitrary
graphs using connected subgraphs of size three, and puts our results
into perspective. 

\begin{table}[htbp]
\centering
\begin{tabular}{|c|c|c||c|c|}
    \hline
    \multicolumn{3}{|c||}{Allowed subgraphs} & \multicolumn{2}{c|}{Complexity according to graph class}\\
    \hline
    $K_{1,3}$ & $K_3$ & $P_4$ & cubic & arbitrary \\
    \hline
    \hline
    $\checkmark$ &  &  & in \P~(\Cref{prop:bipartite-cubic-graphs-are-easy-to-partition}) & \NP-complete~\cite[Theorem 3.5]{Dyer1985} \\

     & $\checkmark$ & & $O(1)$ (impossible) & \NP-complete~\cite{Holyer1981}\\

    &  & $\checkmark$ & in \P~\cite{Kotzig1957} & \NP-complete~\cite[Theorem 3.4]{Dyer1985} \\
    \hline
    $\checkmark$ & $\checkmark$ & & in
    \P~(\Cref{prop:cubic-graph-clawtriangle}) & 
\NP-complete
~\cite[Theorem 3.5]{Dyer1985}
\\
    $\checkmark$ & & $\checkmark$ &
    \NP-complete~(\Cref{corollary:clawpathdecomp-is-hard}) & \NP-complete~\cite[Theorem 3.1]{Dyer1985}\\

     & $\checkmark$ & $\checkmark$ & in \P~(\Cref{prop:cubic-graph-PM-iff-pathtriangle-partition}) & \NP-complete~\cite[Theorem 3.4]{Dyer1985}\\

    \hline
    $\checkmark$ & $\checkmark$ & $\checkmark$ & \NP-complete~(\Cref{corollary:main-problem-is-hard}) & \NP-complete~\cite[Theorem 3.1]{Dyer1985}\\
    \hline
\end{tabular} 
\caption{Known complexity results on decomposing graphs using subsets
  of $\{K_{1,3}, K_3, P_4\}$.}
\label{tab:summary-complexity-results}
\end{table}

\vspace*{-12mm}
\paragraph{Terminology.} We follow \citet{Brandstadt1987} for notation and terminology.
All graphs we consider are simple, connected and nontrivial
(i.e. $|V(G)|\geq 2$ and $|E(G)|\geq 1$).  
Given a set $S$ of graphs, a graph $G$ \emph{admits an
  $S$-decomposition}, or \emph{is $S$-decomposable}, if $E(G)$ can be
partitioned into subgraphs,  
each of which is isomorphic to a graph in $S$. Throughout the paper,
$S$ denotes the set of connected graphs of size 3, 
i.e. $S=\{K_3,K_{1,3},P_4\}$. We study the following problem:

\decisionproblem{{\sprimedecomp}}{a cubic graph $G=(V,E)$, a
non-empty set $S'\subseteq S$.}{
does $G$ admit a {\sprimedecomp}?
}

We let $G[U]$ denote the subgraph of $G$ induced by $U\subseteq V(G)$. Given a graph $G=(V,E)$, \emph{removing} a subgraph
$H=(V'\subseteq V,E'\subseteq E)$ of $G$ consists in 
removing edges $E'$ from $G$ as well as the possibly resulting
isolated vertices. Finally, 
let $G$ and $G'$ be two graphs. Then:
\begin{itemize} 
\item \emph{subdividing} an edge $\{u,v\}\in E(G)$ consists in inserting a
new vertex $w$ into that edge, so that $V(G)$ becomes $V(G)\cup \{w\}$
and $E(G)$ is replaced with $E(G)\setminus \{u,v\} \cup \{u,w\} \cup \{w,v\}$;  
\item \emph{attaching $G'$ to a vertex $u\in V(G)$} means building
 a new graph $H$ by identifying $u$ and some $v\in V(G')$; 
\item \emph{attaching $G'$ to an edge $e\in E(G)$} consists in
  subdividing $e$ using a new vertex $w$, then attaching $G'$ to $w$.
\end{itemize}

\Cref{fig:cubic-lobe} illustrates the process of attaching an edge to
an edge of the cube graph, and shows other small graphs that we will
occasionally use in this paper. 

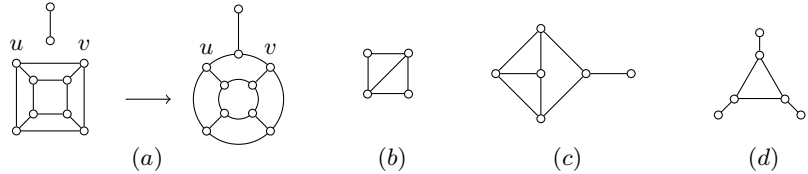
\begin{figure}[htbp]
\begin{center}
\setlength{\tabcolsep}{1.1em} 
\begin{tabular}{ccccc}
\adjustbox{valign=c}{
\begin{tikzpicture}[scale=.6]
    
    \node[vertex,draw] (a) at (0,0) {};
    \node[vertex,draw] (b) at (0,1.5) [label=above:$u$] {};
    \node[vertex,draw] (c) at (1.5,1.5)[label=above:$v$]  {};
    \node[vertex,draw] (d) at (1.5,0) {};
    \draw (a) -- (b) -- (c) -- (d) -- (a);
    \node[vertex,draw] (e) at (.375,1.125) {};
    \node[vertex,draw] (f) at (.375,.375) {};
    \node[vertex,draw] (g) at (1.125,1.125) {};
    \node[vertex,draw] (h) at (1.125,.375) {};
    \draw (e) -- (f) -- (h) -- (g) -- (e);
    \draw (a) -- (f);
    \draw (b) -- (e);
    \draw (c) -- (g);
    \draw (d) -- (h);

    \node[vertex,draw] (u) at (.75, 2) {};
    \node[vertex,draw] (v) at (.75, 2.75) {};
    \draw (u) -- (v);

    \begin{scope}[xshift=140pt,yshift=20pt]
        \draw[->] (-2.5, 0) -- (-1.5, 0);
        \trianglefreelobe
        \node at (b.south) [label=above:{$u$}] {};
        \node at (e.south) [label=above:{$v$}] {};
    \end{scope}
\end{tikzpicture}
}
&
\adjustbox{valign=c}{
\begin{tikzpicture}[scale=.6]
    \diamondgraph
\end{tikzpicture}
}
&
\adjustbox{valign=c}{
\begin{tikzpicture}[scale=.6]
    \cofish
\end{tikzpicture}
}
&
\adjustbox{valign=c}{
\begin{tikzpicture}[scale=.6]
    \netgraph
\end{tikzpicture}
}
\\
$(a)$ & $(b)$ & $(c)$ & $(d)$ 
\end{tabular}
\end{center}
\caption{$(a)$ Attaching a new edge to $\{u,v\}$; 
$(b)$ the diamond graph; $(c)$ the co-fish graph; $(d)$ the net graph.}
\label{fig:cubic-lobe}
\end{figure}

\section{Decompositions Without a $K_{1,3}$} 

In this section, we study decompositions of cubic graphs that use only $P_4$'s 
or $K_3$'s. 
Note that no cubic graph is $\{K_3\}$-decomposable, since all its vertices have odd degree. 
According to \citet{Bouchet1983131}, \citet{Kotzig1957} proved 
that a cubic graph admits a $\{P_4\}$-decomposition iff it
has a perfect matching. However, the proof of the forward direction as presented in~\cite{Bouchet1983131} is incomplete, as it requires the use of
\Cref{prop:p4k3-pas-3-chemins-incidents}.$(b)$ below, which is missing from their paper. 
Therefore, we provide the following proposition for completeness, together
with another result
which will
also be useful for the case where $S'=\{K_3,P_4\}$. 

\begin{proposition}
\label{prop:p4k3-pas-3-chemins-incidents}
Let $G$ be a cubic graph that admits a $\{K_3,P_4\}$-decomposition $D$. Then,
in $D$, (a)~no $K_3$ is used, and (b)~no three $P_4$'s are incident to the same vertex.
\end{proposition}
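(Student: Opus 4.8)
The plan is a short double-counting argument on the endpoints of the $P_4$'s of $D$, using nothing beyond the cubicity of $G$. Write $n=|V(G)|$; then $|E(G)|=3n/2$ (so $n$ is even), and since every member of $D$ has exactly three edges, a decomposition consisting of $t$ triangles and $p$ paths satisfies $t+p=n/2$. What I want to show is $t=0$ together with the stated restriction on $P_4$'s meeting at a vertex.

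For a vertex $v$, let $f(v)$ be the number of $P_4$'s of $D$ that have $v$ as an endpoint, i.e.\ as a degree-$1$ vertex of that path. The key local claim is that $f(v)\in\{1,3\}$ for every $v$. To see this, observe that the members of $D$ containing $v$ partition the three edges of $G$ incident to $v$, and each such member meets $v$ in either one edge ($v$ a leaf of that $P_4$) or two edges ($v$ in a triangle, or $v$ an internal vertex of a $P_4$); a $K_3$ always meets each of its vertices in exactly two edges. Hence the contributions at $v$ form a partition of $3$ into parts from $\{1,2\}$, that is, $(1,1,1)$ or $(2,1)$. In the first case all three members are $P_4$'s with $v$ as a leaf, so $f(v)=3$; in the second case the part equal to $1$ must come from a $P_4$ with $v$ as a leaf (it cannot come from a $K_3$), while the part equal to $2$ comes from a triangle or from an internal visit of a $P_4$, so $f(v)=1$.

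Summing globally, each $P_4$ has exactly two endpoints, so $\sum_v f(v)=2p$; and if $k$ vertices satisfy $f(v)=3$ while the other $n-k$ satisfy $f(v)=1$, then $\sum_v f(v)=3k+(n-k)=n+2k$. Equating gives $p=n/2+k$, and combined with $t+p=n/2$ this forces $t=-k$; since $t\ge 0$ and $k\ge 0$ we conclude $t=k=0$. The equality $t=0$ is exactly statement~(a). For~(b), if three distinct $P_4$'s of $D$ were all incident to a common vertex $v$, then, as $\deg_G(v)=3$ and the paths are pairwise edge-disjoint, each would use exactly one edge at $v$, making $v$ a leaf of all three; this gives $f(v)=3$, contradicting $k=0$.

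I do not expect a real obstacle here. The one step requiring genuine care is the exhaustiveness of the local case analysis for $f(v)$ --- in particular the observation that a $K_3$ can never be responsible for just a single edge at a vertex --- since this is precisely what confines $f(v)$ to the odd values $1$ and $3$ and makes the identity $t=-k$ collapse to $t=k=0$.
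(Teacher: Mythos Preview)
Your proof is correct and follows essentially the same double-counting approach as the paper: both arguments count the endpoints of the $P_4$'s in $D$ after observing that every vertex of $G$ is a $P_4$-endpoint either once or three times. The only cosmetic difference is that the paper pairs this endpoint count with a count of $P_4$ \emph{inner} vertices (obtaining $n_1+3n_3=0$ directly), whereas you pair it with the global identity $t+p=n/2$; the two routes are interchangeable.
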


\begin{proof}
Partition $V(G)$ into three sets $V_1$, $V_2$ and $V_3$, where $V_1$ (resp. $V_2$, $V_3$) is the set of vertices that
are incident to exactly one $P_4$ (resp. two, three $P_4$'s) in $D$. Note that $V_1$
is exactly the set of vertices involved in $K_3$'s in $D$. Let $n_i=|V_i|$,
$1\leq i\leq 3$. Our goal is to show that $n_1=n_3=0$,
i.e. $V_1=V_3=\emptyset$. 
For this, note that (1)~each vertex in $V_3$ is the extremity of three different $P_4$'s, (2)~each
vertex in $V_2$ is simultaneously the extremity of one $P_4$ and
an inner vertex of another $P_4$, while (3)~each vertex in $V_1$ is
the extremity of one $P_4$. Since each $P_4$ has two extremities
and two inner vertices, if $p$ is the number of $P_4$'s in $D$, 
we have:
\begin{itemize}
\item[$\bullet$] $p=\frac{3n_3+n_2+n_1}{2}$ (by (1), (2) and (3) above, counting extremities);
\item[$\bullet$] $p=\frac{n_2}{2}$ (by (2) above, counting inner vertices).
\end{itemize}
Putting together the above two equalities yields $n_1=n_3=0$, which 
completes the proof.\qed
\end{proof}

Since $K_3$'s cannot be used in cubic graphs for $\{K_3,P_4\}$-decompositions by
\Cref{prop:p4k3-pas-3-chemins-incidents} above, we directly obtain the following result, which implies that {\pathtriangledecomp} is in \P.

\begin{proposition}\label{prop:cubic-graph-PM-iff-pathtriangle-partition}
A cubic graph admits a {\pathtriangledecomp} iff it has a perfect matching.
\end{proposition}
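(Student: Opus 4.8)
The plan is to use \Cref{prop:p4k3-pas-3-chemins-incidents} to reduce the statement essentially to the (folklore) equivalence between $\{P_4\}$-decompositions of cubic graphs and perfect matchings, and to give a self-contained argument in both directions.

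For the forward direction, assume $G$ admits a {\pathtriangledecomp} $D$. By \Cref{prop:p4k3-pas-3-chemins-incidents}$(a)$ no $K_3$ occurs in $D$, and by part $(b)$ no vertex is incident to three $P_4$'s; together with the counting already carried out in the proof of that proposition (which establishes $V_1=V_3=\emptyset$), this means every vertex of $G$ is an inner vertex of exactly one $P_4$ of $D$. Since each $P_4$ has exactly two inner vertices and they are adjacent, the set of ``middle edges'' of the $P_4$'s of $D$ is pairwise disjoint and spans $V(G)$, hence it is a perfect matching of $G$.

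For the converse, let $M$ be a perfect matching of $G$. The graph $G'$ obtained from $G$ by removing the edges of $M$ is $2$-regular, hence a disjoint union of cycles; fix an arbitrary orientation of each cycle, so that every vertex $v$ has a unique out-edge $e^+(v)$ in $G'$. For each $\{u,v\}\in M$, let $P_{uv}$ be the subgraph with edge set $\{e^+(u),\{u,v\},e^+(v)\}$. I would then verify that $P_{uv}$ is a $P_4$: denoting by $u'$ and $v'$ the heads of $e^+(u)$ and $e^+(v)$, the vertices $u',u,v,v'$ are pairwise distinct, since $\{u,v\}\notin E(G')$ rules out $u'=v$ and $v'=u$, while $u'=v'$ would force in-degree $2$ at that vertex in the oriented $2$-factor. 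It remains to check that the $P_{uv}$'s partition $E(G)$: each edge of $M$ lies in exactly one $P_{uv}$ by construction, and each edge of $G'$, being $e^+(w)$ for a unique endpoint $w$, belongs to exactly the single path $P_{w w''}$ with $\{w,w''\}\in M$; a count ($|M|=|V(G)|/2$ paths of $3$ edges each, against $|E(G)|=3|V(G)|/2$) confirms it.

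The genuinely routine part is the bookkeeping in the converse direction; the one place that needs attention is checking that the three edges assigned to each $P_{uv}$ really induce a path on four vertices and not a triangle or a shorter walk, which is precisely what the distinctness argument above ensures. I would also remark that since every $\{P_4\}$-decomposition is in particular a {\pathtriangledecomp}, this proposition contains Kotzig's characterisation as a special case, and thus repairs the incomplete argument reported in~\cite{Bouchet1983131}.
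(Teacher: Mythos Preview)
Your argument is correct and follows the same line as the paper: both use \Cref{prop:p4k3-pas-3-chemins-incidents}$(a)$ to discard $K_3$'s and reduce to the $\{P_4\}$ case, and then invoke the Kotzig equivalence with perfect matchings. The only difference is granularity: the paper's proof is a one-line reduction that defers the perfect-matching equivalence to \cite{Kotzig1957} (with \Cref{prop:p4k3-pas-3-chemins-incidents}$(b)$ supplied to patch the forward direction), whereas you spell out both directions in full, including the explicit orientation-based construction of the $\{P_4\}$-decomposition from a perfect matching.
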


\section{Decompositions Without a $P_4$}\label{sec:clawtriangledecomp}

In this section, we study decompositions of cubic graphs that use only $K_{1,3}$'s 
or $K_3$'s. 

\begin{proposition}\label{prop:bipartite-cubic-graphs-are-easy-to-partition}
A cubic graph $G$ admits a $\{K_{1,3}\}$-decomposition iff it is bipartite.
\end{proposition}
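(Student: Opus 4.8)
The statement to prove is that a cubic graph $G$ admits a $\{K_{1,3}\}$-decomposition if and only if it is bipartite. A $K_{1,3}$ (a claw) in a cubic graph uses all three edges at its centre vertex, so a $\{K_{1,3}\}$-decomposition amounts to choosing, for each claw, a distinct centre vertex, such that every edge is covered exactly once. Since each claw centre consumes all three edges incident to its centre, and $G$ is cubic, this means the set $C$ of chosen centres must be such that every edge of $G$ has exactly one endpoint in $C$. In other words, $C$ is an independent set and $V(G)\setminus C$ is also independent (every edge goes between $C$ and its complement). That is precisely a bipartition of $G$.

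For the forward direction, I would argue: suppose $G$ has a $\{K_{1,3}\}$-decomposition $D$. For each claw in $D$, its centre has degree $3$ in $G$, so all three edges at that centre belong to that claw; hence no two claws share a centre, and a vertex that is a centre cannot also be a leaf of another claw (that would require a fourth edge at it). Let $C$ be the set of centres. Every edge lies in exactly one claw and therefore is incident to exactly one vertex of $C$ (the centre of that claw) and one vertex not in $C$ (a leaf). Counting: $|E(G)| = 3|C|$ and also $|E(G)| = \frac{3|V(G)|}{2}$, so $|C| = |V(G)|/2$, and $(C, V(G)\setminus C)$ witnesses that $G$ is bipartite. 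Conversely, if $G$ is bipartite with parts $A$ and $B$, making every vertex of $A$ the centre of a claw uses each edge exactly once (each edge has exactly one endpoint in $A$), giving a $\{K_{1,3}\}$-decomposition.

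The argument is essentially a bijection between $\{K_{1,3}\}$-decompositions and bipartitions, and I do not anticipate a genuine obstacle — the only point requiring slight care is the forward direction's observation that in a cubic graph a claw centre is "locally saturated," which forces the centre set to be exactly one side of a bipartition; everything else is a short counting or incidence check.
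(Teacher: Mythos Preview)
Your proposal is correct and follows essentially the same approach as the paper: both directions proceed exactly as the paper does, by taking the set of claw centres as one side of a bipartition (forward) and taking one part of the bipartition as the set of claw centres (reverse). Your additional counting step $|C|=|V(G)|/2$ is correct but not needed, since the observation that every edge joins a centre to a non-centre already establishes bipartiteness.
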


\begin{proof}
For the reverse direction, select either set of the bipartition, and make each vertex in that set the center of a $K_{1,3}$. 
For the forward direction, let $D$ be a $\{K_{1,3}\}$-decomposition of $G$, and let $C$ and
$L$ be the sets of vertices containing, respectively, all the centers and all the
leaves of $K_{1,3}$'s in~$D$. We show that this is a
bipartition of $V(G)$. First, $C\cup L= V$ since $D$ covers all edges and therefore all vertices. Second, $C\cap L=\emptyset$ since a vertex in $C\cap L$ would have degree at least
4. Finally, each edge in $D$ connects the center of a $K_{1,3}$ and a leaf of
another $K_{1,3}$ in $D$, which belong respectively to $C$ and $L$. Therefore, 
$G$ is bipartite. 
\qed
\end{proof}

We now prove that $\{K_{1,3},K_3\}$-decompositions can be computed in
polynomial time. Recall that a graph is \emph{$H$-free} if it does not contain an
induced subgraph isomorphic to a given graph $H$. Since bipartite
graphs admit a $\{K_{1,3}\}$-decomposition
(by \Cref{prop:bipartite-cubic-graphs-are-easy-to-partition}), we can 
restrict our attention to non-bipartite graphs that contain $K_3$'s
(indeed, if they were $K_3$-free, then only $K_{1,3}$'s would be allowed and
\Cref{prop:bipartite-cubic-graphs-are-easy-to-partition} would imply that
they admit no decomposition). 
Our strategy consists in iteratively removing subgraphs from $G$
and adding them to an initially empty {\clawtriangledecomp} until $G$
is empty, in which case we have an actual decomposition, or no further
removal operations are possible, in which case no decomposition
exists.  Our analysis relies on the following notion: 
a $K_3$ induced by vertices $\{u, v, w\}$ in a graph $G$ is
\emph{isolated} if $V(G)$ contains no vertex $x$ such that
$\{u,v,x\}$,  $\{u,x,w\}$ or  $\{x,v,w\}$ induces a $K_3$.

\begin{lemma}\label{lemma:isolated-triangles-belong-to-decomposition}
 If a cubic graph $G$ admits a $\{K_{1,3},K_3\}$-decomposition $D$, then every isolated $K_3$ in $G$ belongs to $D$.
\end{lemma}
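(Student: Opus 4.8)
The plan is to fix an isolated triangle $T$ of $G$ on vertex set $\{u,v,w\}$ and to trace which members of $D$ cover the three edges of $T$. The argument breaks naturally into two cases: either some edge of $T$ is covered by a $K_3$ of $D$, or all three edges of $T$ are covered by $K_{1,3}$'s of $D$.

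For the first case, observe that a $K_3$ of $D$ covering an edge $e$ of $T$ is a triangle of $G$ through $e$; but the hypothesis that $T$ is isolated says precisely that $T$ is the only triangle of $G$ sharing an edge with $T$, so this $K_3$ must be $T$ itself, and we are done. This case uses the definition of \emph{isolated} and nothing else.

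The real work is in the second case, where I want to reach a contradiction. The key remark is that, since $G$ is cubic, a $K_{1,3}$ of $D$ centred at a vertex $x$ uses all three edges of $G$ incident to $x$; and since every edge of a star meets its centre, the $K_{1,3}$ of $D$ covering a given edge of $T$ is centred at one of the two endpoints of that edge. I would then orient each edge of $T$ towards the centre of the claw covering it, and for each $z\in\{u,v,w\}$ let $d(z)$ be the number of edges of $T$ pointing at $z$, so that $d(u)+d(v)+d(w)=3$. The point is that each $d(z)$ is even: if one of the two edges of $T$ at $z$ points to $z$, then the claw covering it is centred at $z$, hence also contains the other edge of $T$ at $z$, so that edge is covered by the same claw and also points to $z$. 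Three even numbers cannot sum to $3$, so the second case is impossible, and therefore $T\in D$.

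The only delicate point is this parity argument; packaging it as an orientation of the triangle makes the contradiction immediate, and it is worth stressing that beyond the very first step we never need to inspect the neighbours of $u$, $v$, $w$ lying outside $T$.
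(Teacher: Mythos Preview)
Your proof is correct and rests on the same observation as the paper's: a claw of $D$ covering an edge of $T$ is centred at a vertex of $T$ and therefore absorbs both triangle edges at that vertex. The paper compresses the conclusion into a single line---exactly one vertex of $T$ can then be a claw centre, leaving the opposite edge uncoverable---whereas you cast the same fact as a parity count on in-degrees; these are the same argument in different clothing.
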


\begin{proof}[contradiction]
 If an isolated $K_3$ were not part of the decomposition, then
 exactly one vertex of that $K_3$ would be the center of a
 $K_{1,3}$, leaving the remaining edge uncovered and uncoverable. 
\qed
\end{proof}

$\overline{C_6}$ is a minimal example of a cubic
non-bipartite graph with $K_3$'s that admits no $\{K_{1,3},
K_3\}$-decomposition: both $K_3$'s in that graph must belong to the
decomposition
(by \Cref{lemma:isolated-triangles-belong-to-decomposition}), but their
removal yields a perfect matching.

\begin{observation}
\label{obs:removals-from-connected-cubic-graph-make-it-noncubic}
Let $G$ be a connected cubic graph. Then no sequence of at least one
edge or vertex removal from $G$ yields a cubic graph. 
\end{observation}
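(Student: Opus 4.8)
The statement to prove is Observation~\ref{obs:removals-from-connected-cubic-graph-make-it-noncubic}: that starting from a connected cubic graph, no sequence of at least one edge or vertex removal can result in a cubic graph. Since removing a vertex is (up to also deleting the isolated vertices that result) the same as removing its incident edges, it suffices to track how a single edge removal changes vertex degrees, and then argue the resulting graph can never be ``repaired'' back to cubic by further removals.

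The cleanest approach is to argue directly about the first removal and use monotonicity. The plan is:

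\begin{itemize}
\item First, observe that in the definition of ``removing'' a subgraph, only edges get deleted (isolated vertices are discarded as a cleanup step, but they carry no edges); so any removal sequence from $G$ deletes a nonempty set $E' \subseteq E(G)$ of edges and then drops whatever vertices became isolated. In particular, the multiset of vertex degrees can only go down: every vertex still present in the final graph $G'$ has $\deg_{G'}(v) \le \deg_G(v) = 3$, with equality only if none of $v$'s three incident edges were removed.
\item Next, take any vertex $u$ incident to an edge of $E'$ (such a vertex exists since $E' \ne \emptyset$ and $G$ has no loops). Then $\deg_{G'}(u) < 3$ if $u$ survives in $G'$; so for $G'$ to be cubic, $u$ must have been deleted, which means all three edges at $u$ were removed. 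Now walk outward: each neighbour $w$ of $u$ also loses the edge $uw$, so either $w$ is deleted (hence all three edges at $w$ are removed) or $w$ survives with degree $<3$, contradicting cubicity of $G'$. Iterating this along paths shows that the set of deleted vertices is closed under taking neighbours in $G$ within the portion of $G$ touched, and since $G$ is connected, every vertex of $G$ must be deleted --- so $G'$ is the empty graph, which is not a graph (the paper restricts to nontrivial graphs, and in any case the empty graph is not cubic).
\end{itemize}

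Slightly more carefully for the iteration: let $A$ be the set of vertices of $G$ that are \emph{not} present in $G'$. We show $A = V(G)$. We have $A \ne \emptyset$ by the previous point. Suppose $A \ne V(G)$; by connectedness there is an edge $\{a,b\}$ of $G$ with $a \in A$, $b \notin A$. Since $a \in A$, $a$ was deleted, so in particular the edge $\{a,b\}$ was among the removed edges $E'$. Hence $\deg_{G'}(b) \le \deg_G(b) - 1 = 2 < 3$, contradicting that $G'$ is cubic. Therefore $A = V(G)$, so $G'$ has no vertices, contradicting that $G'$ is a (nontrivial) cubic graph.

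The only mild subtlety --- and the step I would be most careful about --- is the bookkeeping around the ``discard isolated vertices'' clause in the definition of removal: I must make sure that a vertex disappears from $G'$ \emph{only} if it became isolated, i.e. \emph{all} its incident edges were eventually removed, so that the implication ``$a$ deleted $\Rightarrow$ edge $\{a,b\}$ removed'' is valid. This follows because an isolated vertex has no incident edges, and edges are never re-added; so if $a \notin V(G')$ then every edge of $G$ at $a$ lies in the removed set $E'$. With that in hand the connectivity argument closes immediately, and no case analysis on the length or structure of the removal sequence is needed.
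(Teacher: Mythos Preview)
Your argument is correct. The paper's own proof takes a different, more operational framing: it looks at the last step of the removal sequence and asserts that the graph immediately preceding $G'$ would have to contain a vertex of degree at least four --- impossible, since every intermediate graph is a subgraph of the cubic $G$ and hence has all degrees at most three. You instead compare $G$ and $G'$ directly: the set $A$ of deleted vertices is nonempty (any endpoint of a removed edge must lie in $A$, else it would survive with degree $<3$) and has no $G$-edge to $V(G')$ (such an edge would lower the degree of its surviving endpoint below $3$), so connectivity forces $A=V(G)$ and $G'$ is empty. Both arguments rest on the same core fact --- a connected cubic graph has no proper cubic subgraph --- but yours makes the role of connectivity explicit and avoids reasoning about intermediate steps, while the paper's one-liner is terser and leaves that step to the reader. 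Your closing worry about the ``discard isolated vertices'' bookkeeping is harmless but not strictly needed: the implication ``$a\notin V(G')\Rightarrow\{a,b\}\notin E(G')$'' already follows from the fact that every edge of $G'$ must have both endpoints in $V(G')$, regardless of the mechanism by which $a$ was removed.
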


\begin{proof}[contradiction]
    If after applying at least one removal from $G$ we obtain
    a cubic graph $G'$, then the graph that precedes $G'$ in this removal
    sequence must have had a vertex of degree at least four, since $G$
    is connected.\qed 
\end{proof}

\begin{proposition}
\label{prop:all-triangles-isolated-then-clawtriangledecomp-poly}
    For any non-bipartite cubic graph $G$ whose $K_3$'s are all
    isolated, one can decide in polynomial time whether $G$ is
    $\{K_{1,3}, K_3\}$-decomposable. 
\end{proposition}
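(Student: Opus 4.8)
The plan is to invoke \Cref{lemma:isolated-triangles-belong-to-decomposition} to reduce the question to a $K_{1,3}$-only decomposition problem on an auxiliary graph, and then to settle that problem directly. First I would note that, when all $K_3$'s of a cubic graph are isolated, they are pairwise vertex-disjoint: two $K_3$'s sharing an edge would each be non-isolated, and two $K_3$'s sharing exactly one vertex would give that vertex degree at least $4$. So the $K_3$'s of $G$ form a set $\mathcal{T}=\{T_1,\dots,T_k\}$ of pairwise vertex-disjoint triangles, computable in polynomial time. Let $G'$ be the graph obtained from $G$ by removing $T_1,\dots,T_k$. Since each vertex of a $T_i$ loses exactly the two edges of $T_i$ incident to it, $V(G')=V(G)$, every vertex that lay on a triangle has degree $1$ in $G'$, and every other vertex keeps degree $3$; moreover $G'$ is triangle-free, as removing edges cannot create a triangle and every triangle of $G$ has been destroyed.

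Next I would show that $G$ is $\{K_{1,3},K_3\}$-decomposable iff $G'$ is $\{K_{1,3}\}$-decomposable. For the forward direction, \Cref{lemma:isolated-triangles-belong-to-decomposition} forces every $T_i$ into a given decomposition $D$ of $G$; the remaining members of $D$ cover exactly $E(G')$, and since $G'$ contains no $K_3$ none of them is a $K_3$, so they constitute a $\{K_{1,3}\}$-decomposition of $G'$. The backward direction just appends $T_1,\dots,T_k$ to a $\{K_{1,3}\}$-decomposition of $G'$.

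It remains to decide $\{K_{1,3}\}$-decomposability of $G'$, which is where the actual work lies, although it is not hard. A $\{K_{1,3}\}$-decomposition corresponds exactly to an orientation of $E(G')$ in which every in-degree is divisible by $3$ (orient each edge towards the center of its claw; conversely, group the in-edges at each vertex into claws). As $G'$ has maximum degree $3$, each degree-$1$ vertex must then have in-degree $0$ and each degree-$3$ vertex in-degree $0$ or $3$, so every edge is oriented from a vertex of in-degree $0$ to a vertex of in-degree $3$; hence the split into these two classes is a proper $2$-colouring of $G'$ whose ``in-degree $3$'' side contains only degree-$3$ vertices. Conversely, any proper $2$-colouring of $G'$ one of whose sides consists solely of degree-$3$ vertices yields such an orientation by directing all edges towards that side. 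Therefore $G'$ is $\{K_{1,3}\}$-decomposable iff every connected component of $G'$ is bipartite and has all its degree-$1$ vertices on one side of its (unique up to swap) bipartition; for components with no degree-$1$ vertex this reduces to bipartiteness, consistently with \Cref{prop:bipartite-cubic-graphs-are-easy-to-partition}. All of these conditions are polynomial-time checkable, which finishes the proof. I expect the last step to be the main obstacle: recognising that $G'$, though no longer cubic, still admits a clean orientation/$2$-colouring characterisation of $\{K_{1,3}\}$-decomposability, forced by its degrees being confined to $\{1,3\}$.
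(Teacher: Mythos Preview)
Your argument is correct and follows a genuinely different line from the paper's. Both proofs begin identically: invoke \Cref{lemma:isolated-triangles-belong-to-decomposition}, observe that the isolated triangles are pairwise vertex-disjoint, strip them out, and reduce to a $\{K_{1,3}\}$-decomposition question on the leftover graph $G'$, whose vertices now have degree $1$ or $3$. From there the paper runs a propagation algorithm: it repeatedly locates a vertex of degree~$1$ or~$2$ in the current graph, notes that its role (leaf of a claw, respectively meeting point of two claws) is forced, removes the corresponding $K_{1,3}$'s, and appeals to \Cref{obs:removals-from-connected-cubic-graph-make-it-noncubic} to guarantee that a low-degree vertex is always available until the graph is exhausted or a contradiction is reached. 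You instead give a closed-form structural criterion: since $G'$ has only degrees~$1$ and~$3$, a $\{K_{1,3}\}$-decomposition is exactly an orientation with all in-degrees in $\{0,3\}$, which in turn is exactly a proper $2$-colouring of each component with all degree-$1$ vertices on one side. Your version is cleaner and yields an explicit characterisation that can be checked directly by a bipartiteness test; the paper's procedure in effect discovers the same $2$-colouring by constraint propagation, with its degree-$2$ rule handling the intermediate states that arise mid-iteration (states your global argument never needs to consider).
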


\begin{proof}
    We build a $\{K_{1,3}, K_3\}$-decomposition by iteratively
    removing $K_{1,3}$'s and $K_3$'s from $G$, which we add as we
    go to an initially empty set $D$.  
    By \Cref{lemma:isolated-triangles-belong-to-decomposition}, all
    isolated $K_3$'s must belong to $D$, so we start by adding them
    all to $D$ and removing them from $G$; therefore, $G$ admits a
    {\clawtriangledecomp} iff the resulting subcubic graph
    $G'$ admits a {\clawdecomp}.      
    Observe that $G'$ contains vertices of degree $1$ and $2$; we note that:
    \begin{enumerate}
        \item each vertex of degree $1$ must be the leaf of some $K_{1,3}$ in $D$;
        \item each vertex of degree $2$ must be the meeting point of two $K_{1,3}$'s in $D$.
    \end{enumerate}
    The only ambiguity arises for vertices of degree $3$, which may
    either be the center of a $K_{1,3}$ in $D$ or the meeting point of
    three $K_{1,3}$'s in $D$; however, there will always exist at
    least one other vertex of degree $1$ or $2$ until the graph is
    empty
    (by \Cref{obs:removals-from-connected-cubic-graph-make-it-noncubic}).  
    Therefore, we can safely remove $K_{1,3}$'s from our
    graph and add them to $D$ by following the above rules in the stated order; if we succeed in deleting the whole
    graph in this way, then $D$ is a {\clawtriangledecomp} 
    of $G$, otherwise no such decomposition exists.\qed 
\end{proof}

We conclude with the case where the graph may contain non-isolated $K_3$'s.

\begin{proposition}
    If a cubic graph $G$ contains a diamond, then one can decide in polynomial time whether $G$ is $\{K_{1,3},K_3\}$-decomposable.
\end{proposition}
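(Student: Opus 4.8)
The plan is to exploit the fact that a diamond in a cubic graph forces almost all of the local decomposition, so that it (together with a bounded neighbourhood) can be stripped off, leaving a strictly smaller instance. First I would dispose of a degenerate case: if $G=K_4$ then $G$ is $\{K_{1,3},K_3\}$-decomposable (take any triangle together with the star on the remaining vertex), so assume $G\neq K_4$; then every diamond of $G$ is an \emph{induced} subgraph, since adding the missing edge between its two non-adjacent vertices would turn its four vertices into a clique, which for a connected cubic graph forces $G=K_4$. Fix an induced diamond on $\{a,b,c,d\}$ with edges $ab,bc,cd,da,bd$. Then $b$ and $d$ already have degree $3$ inside it, hence have no further neighbours, while $a$ and $c$ each have exactly one neighbour outside it, say $a'$ and $c'$ (with $a'\notin\{b,c,d\}$ and $c'\notin\{a,b,d\}$, but possibly $a'=c'$).

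The heart of the argument is a local structure lemma. In every $\{K_{1,3},K_3\}$-decomposition $D$ of $G$, the member of $D$ covering $bd$ is a triangle, namely $\{a,b,d\}$ or $\{b,c,d\}$ (the only triangles through $bd$). It cannot be a $K_{1,3}$: such a claw would be centred at $b$ or $d$, say at $b$, hence equal to $b;\{a,c,d\}$; after its removal both $a$ and $d$ retain exactly two incident edges and the only triangle of $G$ through $ad$ is $\{a,b,d\}$, two of whose edges are already used, so $ad$ becomes uncoverable, a contradiction. Assume then that $D$ contains $\{a,b,d\}$ (the other case is symmetric under exchanging $a\leftrightarrow c$ and $a'\leftrightarrow c'$). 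Then $b$ retains only the edge $bc$, which must lie in a claw centred at $c$, necessarily $c;\{b,d,c'\}$; and after removing these two subgraphs $a$ retains only $aa'$, forcing a claw with $a$ as a leaf and centre $a'$, namely the claw on $a'$ and its three neighbours. So $D$ contains either the edge-disjoint triple $\bigl(\{a,b,d\},\ c;\{b,d,c'\},\ \text{the claw on }a'\bigr)$ or the symmetric triple with $a,c$ (and $a',c'$) swapped. In particular, if $a'=c'$ then neither triple can occur — the claw at $a'=c'$ would need three free edges there, but its edge to $c$ (resp.\ to $a$) has already been consumed — so $G$ admits no decomposition.

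Assuming $a'\neq c'$, each of the two triples is genuinely edge-disjoint and its removal deletes the five vertices $a,b,c,d$ and one of $a',c'$, lowering the degrees of a few further vertices and leaving a strictly smaller subcubic graph; call the two outcomes $G_a$ and $G_c$. By the lemma, $G$ is $\{K_{1,3},K_3\}$-decomposable iff $G_a$ or $G_c$ is. Iterating, the edge count strictly decreases, so after polynomially many reductions we reach graphs containing no diamond, hence (since a non-isolated triangle always sits inside a diamond or a $K_4$, and a $K_4$ that arises as a connected component is handled directly) graphs all of whose triangles are isolated, on which the procedure from the proof of \Cref{prop:all-triangles-isolated-then-clawtriangledecomp-poly} — which already copes with the degree-$1$ and degree-$2$ vertices produced by the reductions — together with \Cref{prop:bipartite-cubic-graphs-are-easy-to-partition} and \Cref{lemma:isolated-triangles-belong-to-decomposition} decides decomposability in polynomial time.

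The step I expect to be the real obstacle is precisely the iteration: as just described, resolving a diamond branches into two sub-instances, which is exponential if done naively. To keep the whole procedure polynomial I would analyse how the binary choice at one diamond propagates — resolving it fixes, through the edges $aa'$ and $cc'$, whether $a'$ (resp.\ $c'$) becomes a claw centre, which constrains the neighbouring diamonds, so the choices are linked along chains of diamonds rather than being independent — and then argue that the resulting dependency structure (together with the constraints that the chosen claw centres form an independent set and the chosen triangles be vertex-disjoint) can be resolved globally in polynomial time, for instance by a reduction to $2$-satisfiability or to a matching problem.
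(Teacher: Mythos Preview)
Your local analysis of the diamond is correct and essentially matches the paper's: in any decomposition the diagonal $bd$ lies in one of the two triangles, the opposite vertex is then the centre of a claw reaching outside the diamond, and the case $a'=c'$ is impossible. Where you diverge from the paper is in what happens next, and this is exactly the step you flag as the ``real obstacle''.

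You propose to strip off the diamond (plus the forced claw at $a'$), obtain two smaller instances $G_a$, $G_c$, and iterate; you then notice this branches exponentially and hope to rescue polynomial time via a $2$-SAT or matching argument that you do not actually carry out. The paper avoids this difficulty entirely, and by a much simpler observation: you only ever need to branch \emph{once}. After removing the triangle and the single claw inside the diamond (the paper does not also remove the claw at $a'$, though that is harmless), the resulting graph has vertices of degree $1$ and $2$, and the decomposition is then \emph{completely forced} by propagation: a degree-$1$ vertex must be a leaf of a claw centred at its unique neighbour; a degree-$2$ vertex lying in a triangle must be in that triangle in $D$ (otherwise removing the claw through it leaves an uncoverable $P_3$); a degree-$2$ vertex not in any triangle must be a leaf of two claws centred at its two neighbours. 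Each rule dictates a specific subgraph to remove, and by \Cref{obs:removals-from-connected-cubic-graph-make-it-noncubic} there is always a vertex of degree at most $2$ until the graph is empty. So the whole algorithm is: try each of the two initial choices, propagate deterministically, accept iff one of the two runs empties the graph.

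You actually discovered the first step of this propagation yourself --- your forced claw at $a'$ is precisely what the degree-$1$ rule gives once $a$ becomes a leaf --- but you then went back to looking for the next diamond instead of continuing to propagate from the new low-degree vertices. That is the missing idea; with it, the $2$-SAT/matching machinery is unnecessary.
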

\begin{proof}
    The only cubic graph on $4$ vertices is $K_4$, which is diamond-free and $\{K_{1,3},K_3\}$-decomposable, so we assume $|V(G)|\geq 6$.
Let $\mathcal{D}$ be a  diamond in $G$ induced by vertices 
    $\{u,v,w,x\}$ and such that $\{u,x\}\not\in
    E(G)$, as shown in 
    \Cref{fig:diamonds-and-clawtriangledecomps}$(a)$. $\mathcal{D}$ is connected to two other vertices $u'$ and $x'$ of $G$, which are respectively adjacent to $u$ and $x$, and there are only two ways to use the edges of $\mathcal{D}$ in a \clawtriangledecomp, as shown in
    \Cref{fig:diamonds-and-clawtriangledecomps}$(b)$ and $(c)$. If
    $u'=x'$, regardless of the decomposition we choose for
    $\mathcal{D}$, $u'$ and its neighbourhood induce a $P_3$ in the
    graph obtained from $G$ by removing the parts added to $D$. But then that $P_3$ cannot be covered, so no {\clawtriangledecomp}
    exists for $G$. Therefore, we assume that $u'\neq x'$. 

As \Cref{fig:diamonds-and-clawtriangledecomps}$(b)$ and $(c)$ show,
either $\{u,v,w\}$ or $\{v,w,x\}$ must form a $K_3$  in $D$, thereby
forcing either $\{v,w,x,x'\}$ or $\{u',u,v,w\}$ to form a $K_{1,3}$ in
$D$. In both cases, removing the $K_3$ and the $K_{1,3}$ yields a
graph $G'$ which contains vertices of degree $1$, $2$ or $3$. As in the 
proof of \Cref{prop:all-triangles-isolated-then-clawtriangledecomp-poly},
Observation~1 allows
us to make the following helpful observations: 
\begin{enumerate}
    \item every leaf in $G'$ must be the leaf of some $K_{1,3}$ in $D$;
    \item every vertex $y$ of degree two in $G'$ must either belong to a
      $K_3$ or be a leaf of two distinct $K_{1,3}$'s in $D$, which
      can be decided as follows:
\begin{enumerate}
    \item if $y$ belongs to a $K_3$ in $G'$, then it must also belong
      to a $K_3$ in $D$; otherwise, it would be the leaf of a
      $K_{1,3}$ and the graph obtained by removing that $K_{1,3}$
      would contain a $P_3$, which we cannot cover; 
    \item otherwise, $y$ must be a leaf of two $K_{1,3}$'s in $D$.
\end{enumerate}
\end{enumerate}

We therefore iteratively remove subgraphs from our graph and add
them to $D$ according to the above rules, which we follow in the
stated order; if we succeed in deleting the whole graph in this way
using either decomposition in
\Cref{fig:diamonds-and-clawtriangledecomps}$(b)$ or $(c)$ as a
starting point, then $D$ is a $\{K_{1,3}, K_3\}$-decomposition of $G$,
otherwise no such decomposition exists.\qed 
\end{proof}

\begin{figure}[htbp]
\setlength{\tabcolsep}{1.5em} 
\centering
\begin{tabular}{ccc}   
 \begin{tikzpicture}[scale=.4]
\diamondbeforepartition
\end{tikzpicture}
&
 \begin{tikzpicture}[scale=.4]
\diamondbeforepartition
\begin{scope}[every path/.style=part]
    \draw[Redish] (b) to (c) -- (d) to (b);
    \draw[Greenish] (b) to (a);
    \draw[Greenish] (a) to (d);
    \draw[Greenish] (a) to (6);
\end{scope}
\end{tikzpicture}
&
 \begin{tikzpicture}[scale=.4]
\diamondbeforepartition
\begin{scope}[every path/.style=part]
    \draw[Greenish] (b) to (c);
    \draw[Greenish] (c) to (d);
    \draw[Greenish] (5) to (c);
     \draw[Redish] (b) to (a) -- (d) to (b);
\end{scope}
\end{tikzpicture}
\\
$(a)$ & $(b)$ & $(c)$
\end{tabular}
\caption{$(a)$ A diamond in a cubic graph, and $(b), (c)$ the only two ways to decompose it in a {\clawtriangledecomp}.}
\label{fig:diamonds-and-clawtriangledecomps}
\end{figure}
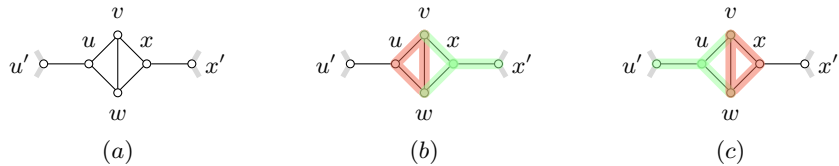

All the arguments developed in this section lead to the
following result.

\begin{proposition}
\label{prop:cubic-graph-clawtriangle}
The {\sc \clawtriangledecomp} problem on cubic graphs is in \P.
\end{proposition}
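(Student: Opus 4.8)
The plan is to glue together the propositions of this section into a single polynomial-time decision procedure that branches on whether $G$ is bipartite and whether it contains a triangle. Let $G$ be the input connected cubic graph. Since $3|V(G)|=2|E(G)|$ is even, $|V(G)|$ is even, so $|V(G)|\geq 4$, with $|V(G)|=4$ precisely when $G=K_4$.

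First I would dispose of the single smallest instance: $K_4$ is $\{K_{1,3},K_3\}$-decomposable --- take a $K_3$ on three of its vertices together with the $K_{1,3}$ centred at the fourth --- so on a four-vertex input the procedure answers ``yes''. Assume henceforth that $|V(G)|\geq 6$, and test bipartiteness (in linear time). If $G$ is bipartite, then by \Cref{prop:bipartite-cubic-graphs-are-easy-to-partition} it has a $\{K_{1,3}\}$-decomposition, which is in particular a {\clawtriangledecomp}, so we answer ``yes''. If $G$ is not bipartite, test whether it contains a triangle. If $G$ is triangle-free, then no part of a hypothetical {\clawtriangledecomp} could be a $K_3$, so such a decomposition would be a $\{K_{1,3}\}$-decomposition, which by \Cref{prop:bipartite-cubic-graphs-are-easy-to-partition} would force $G$ to be bipartite, a contradiction; hence we answer ``no''.

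The remaining case is a non-bipartite $G$ with $|V(G)|\geq 6$ that contains a triangle, and here I would branch on whether all triangles of $G$ are isolated. If they are, \Cref{prop:all-triangles-isolated-then-clawtriangledecomp-poly} decides the instance in polynomial time. Otherwise $G$ has a non-isolated triangle $\{u,v,w\}$, so some vertex $x\notin\{u,v,w\}$ satisfies (up to relabelling the triangle's vertices) that $\{u,v,x\}$ also induces a $K_3$; then $\{u,v,w,x\}$ spans the five edges $uv,uw,vw,ux,vx$. If $wx\in E(G)$ as well, those four vertices would induce a $K_4$ and, $G$ being cubic, would have no neighbour outside the set, contradicting connectivity since $|V(G)|\geq 6$. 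Hence $wx\notin E(G)$ and $G[\{u,v,w,x\}]$ is an induced diamond, so the proposition on cubic graphs containing a diamond applies and decides the instance in polynomial time.

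Finally I would observe that the branches above are exhaustive and mutually exclusive, and that every test used (parity, bipartiteness, the presence of a triangle, the presence of a non-isolated triangle) as well as each invoked subprocedure runs in polynomial time, so the whole procedure does. I expect essentially no real obstacle here: the algorithmic substance has already been carried out in the preceding propositions, and the only point needing a short argument is the last one, namely that in a connected cubic graph other than $K_4$ every non-isolated triangle lies inside an induced diamond.
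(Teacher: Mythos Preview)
Your proposal is correct and follows precisely the route the paper intends: the paper's own ``proof'' is the single sentence ``All the arguments developed in this section lead to the following result,'' and you have simply written out that implicit case split explicitly. The one piece of genuine content you add---that in a connected cubic graph with at least six vertices a non-isolated triangle must sit inside an \emph{induced} diamond (because the alternative would be a $K_4$ component)---is exactly the glue needed to make the two subcases ``all triangles isolated'' and ``contains a diamond'' exhaustive, and your argument for it is sound.
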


\section{Decompositions That Use Both $K_{1,3}$'s and $P_4$'s} 

\label{sec:hardness}

In this section, we show that problems {\sc \clawpathdecomp} and {\sc \clawtrianglepathdecomp} are \NP-complete. 
Our hardness proof relies on two intermediate problems that we define below and is structured as follows:

\scalebox{.75}{
\begin{minipage}{\textwidth}
\begin{align*}
        &\ \mbox{\sc cubic planar monotone 1-in-3 satisfiability} & \\
 \leq_P &\ \mbox{\sc degree-2,3 {\clawtrianglepathdecomp} with marked edges} & \mbox{(\Cref{thm:decomposition-with-marked-edges-is-hard} page~\pageref{thm:decomposition-with-marked-edges-is-hard})}\\
 \leq_P &\ \mbox{\sc {\clawtrianglepathdecomp} with marked edges} & \mbox{(\Cref{lemma:subcubic-marked-edges-reduces-to-cubic-marked-edges} page~\pageref{lemma:subcubic-marked-edges-reduces-to-cubic-marked-edges})}\\
 \leq_P &\ \mbox{\sc {\clawpathdecomp}} & \mbox{(\Cref{lemma:decomposition-with-marked-edges-reduces-to-main-problem} page~\pageref{lemma:decomposition-with-marked-edges-reduces-to-main-problem})}\\
\end{align*}
\end{minipage}
}

\noindent We start by introducing the following intermediate problem: 

\decisionproblem{{\clawtrianglepathdecomp} with marked edges}{a cubic
  graph $G=(V,E)$ and a subset $M\subseteq E$ of edges.}{does $G$
  admit a {\clawtrianglepathdecomp} $D$ such that no edge in $M$ is
  the middle edge of a $P_4$ in $D$ and such that every $K_3$ in
  $D$ has either one or two edges in $M$?} 

The drawings that illustrate our proofs in this section show
marked edges as dotted edges. The proof of
\Cref{lemma:decomposition-with-marked-edges-reduces-to-main-problem}
uses the following result. 

\begin{lemma}\label{lemma:bridge-must-be-middle-edge-of-P4}
Let $e$ be a bridge in a cubic graph $G$ which admits a $\{K_{1,3},K_3,P_4\}$-decomposition $D$. 
Then $e$ must be the middle edge of a $P_4$ in $D$.
\end{lemma}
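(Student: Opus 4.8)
The plan is to analyze how the three edges incident to each endpoint of the bridge $e=\{a,b\}$ can be covered by the pieces of $D$, and to rule out every possibility except that $e$ is the middle edge of a $P_4$. Write $a$ for the endpoint whose two other incident edges go into one side of $G-e$, and $b$ symmetrically. Since $G$ is cubic, $a$ has exactly three incident edges: $e$ and two others, say $e_1,e_2$, and likewise $b$ has $e$, $f_1$, $f_2$. The key structural fact I would exploit is that removing $e$ disconnects $G$ into two components, $A$ containing $a$ and $B$ containing $b$, so any single piece of $D$ (a $K_3$, $K_{1,3}$ or $P_4$) that uses $e$ cannot use any edge strictly inside both $A$ and $B$ — its edge set, being connected and of size $3$, can touch the other side only through $e$ itself.

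First I would handle the piece $H \in D$ that covers $e$. A $K_3$ is impossible: a triangle through $e$ would require a common neighbour of $a$ and $b$, contradicting that $e$ is a bridge. So $H$ is either a $K_{1,3}$ or a $P_4$. If $H$ is a $K_{1,3}$, its center is $a$ or $b$ (say $a$), and then $H = \{e, e_1, e_2\}$, i.e. $H$ consumes \emph{all three} edges at $a$. But then consider $b$: the piece of $D$ covering $f_1$ is confined to $B$, as is the piece covering $f_2$; every piece incident to $b$ lies entirely in $B\cup\{e\}$, and $e$ is already used, so $b$ is an internal vertex of the subgraph of $B$ covered by $D\setminus\{H\}$ of odd degree — more carefully, $D$ restricted to $B$ must decompose all edges of $B$ plus must account for $b$ having degree $2$ inside $B$. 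I would phrase this via a parity/counting argument: $\{K_{1,3},K_3,P_4\}$-decompositions of a graph with a degree-constraint — actually the cleanest route is: after removing $H$, the graph $B$ (with $b$ now of degree $2$) must be $\{K_{1,3},K_3,P_4\}$-decomposable, but a direct check of the local configuration at $a$ gives the contradiction more simply. Let me instead argue: if $H$ is a $K_{1,3}$ centered at $a$, then $e_1=\{a,a_1\}$ and $e_2=\{a,a_2\}$ are both in $H$, so $a_1$ and $a_2$ each have their edge to $a$ already covered; nothing forces a contradiction at $a$ directly, so the contradiction must come from $b$. The honest statement is that the $K_{1,3}$ case needs the same endpoint-parity observation as the claim itself.

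Therefore the core of the argument is a parity observation at $b$ (symmetrically $a$): consider the three pieces of $D$ incident to $b$ (counted with multiplicity — a piece may be incident "twice" if $b$ is its center or an internal $P_4$ vertex). Let me set it up cleanly. Each edge at $b$ lies in exactly one piece. The edges $f_1, f_2$ lie in pieces contained in $B$. The edge $e$ lies in $H$. If $H$ is a $K_{1,3}$ centered at $b$, then $H=\{e,f_1,f_2\}$ — symmetric to before. If $H$ is a $K_{1,3}$ with $b$ as a leaf (center $a$), that is the case above. If $H$ is a $P_4$, then in $H$, $e$ is either an end edge (so $b$ is an endpoint or an internal vertex of $H$) or the middle edge. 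I would enumerate: (i) $e$ is the middle edge of the $P_4$ $H$ — this is the desired conclusion; (ii) $e$ is an end edge of $H$ with $b$ as the degree-$1$ endpoint of $H$ — then $H$ continues into $A$ with two more edges, so $f_1,f_2$ are covered by pieces in $B$, fine so far, but now do the same analysis swapping roles and use that the number of $P_4$'s/$K_{1,3}$'s "passing through" forces a contradiction; (iii) $e$ is an end edge of $H$ with $b$ an internal vertex of $H$ — then $H$ uses one of $f_1,f_2$, say $f_1$, and a fourth edge in $B$, so $H=\{?,e,f_1,g\}$ with the "?" edge in $A$; then $f_2$ is covered by a piece in $B$.

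\textbf{Main obstacle.} The cases (ii) and (iii), and the $K_{1,3}$ cases, all need to be closed, and the cleanest uniform tool I expect to need is a counting identity of exactly the flavor of \Cref{prop:p4k3-pas-3-chemins-incidents}: restrict $D$ to one side, say the component $B$ together with the dangling half-edge $e$, and count endpoints versus internal vertices of the $P_4$'s and leaves versus centers of the $K_{1,3}$'s, using that every vertex of $B$ except $b$ has degree $3$ while $b$ has degree $2$ (plus the stub). The parity of $|E(B)|$ relative to $|V(B)|$ together with the handshake in $B$ should force that the stub at $e$ must be "absorbed" in a way consistent only with $e$ being the middle of a $P_4$ whose two other edges lie one in $A$ and one in $B$ — wait, that is wrong, a middle edge of a $P_4$ has its two neighbouring edges on opposite sides, which is exactly consistent with $e$ being a bridge. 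So the right dichotomy is: either $e$ is the middle edge of a $P_4$ (consistent), or $e$'s piece lies "mostly on one side," which over-saturates or under-saturates that side's degree sequence. The main work is making the two-sided counting rigorous; I anticipate spending most of the proof on a short lemma that for a connected graph $B'$ obtained by attaching a pendant edge (the stub) to a cubic graph at one vertex $b$, any $\{K_{1,3},K_3,P_4\}$-decomposition must use the stub as an end edge of a $P_4$ (never in a $K_3$, never in a $K_{1,3}$, never as a middle edge since the stub has a degree-$1$ endpoint) and moreover the piece containing the stub together with the constraint that all of $B$ must be covered forces a specific parity — and then gluing the two sides $A$ and $B$ along $e$ gives that $e$ is the concatenation point, i.e. the middle edge of the merged $P_4$.
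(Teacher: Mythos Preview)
Your plan circles around the right idea—a counting argument restricted to one bank of the bridge—but never lands on the actual invariant, and the detours through local case analysis, Proposition~\ref{prop:p4k3-pas-3-chemins-incidents}, and the proposed ``stub lemma'' do not close the remaining cases. The paper's proof is a two-line edge count modulo~$3$: if the piece $H\in D$ containing $e$ is a $K_{1,3}$, or a $P_4$ in which $e$ is an end edge, then after removing $H$ one bank, say $B$, has exactly one vertex of degree~$2$ and all other vertices of degree~$3$; the handshake gives $2|E(B)| = 2 + 3(|V(B)|-1) \equiv 2 \pmod 3$, so $|E(B)|\not\equiv 0\pmod 3$ and the remaining pieces of $D$ cannot cover $B$. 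That single observation disposes of your cases (ii), (iii) and both $K_{1,3}$ orientations simultaneously.

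Two concrete problems with the route you sketched. First, you repeatedly write ``parity,'' but a mod-$2$ argument yields only that $|V(B)|$ is odd, which is no contradiction; the invariant you need is divisibility of $|E(B)|$ by~$3$. Second, Proposition~\ref{prop:p4k3-pas-3-chemins-incidents} is about cubic graphs and says nothing once you delete $H$ and create vertices of lower degree, so invoking it here does not help; likewise the local analysis at $b$ alone cannot produce a contradiction in cases (ii)--(iii), since there exist small graphs with a degree-$2$ vertex that are $\{K_{1,3},K_3,P_4\}$-decomposable—the obstruction is global, not local. Once you swap ``parity'' for the mod-$3$ edge count on one bank, the entire case split and the stub lemma become unnecessary.
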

\begin{proof}[contradiction]
First note that $e$ cannot belong to a $K_3$ in $D$. Now suppose $e$ is part of a $K_{1,3}$ in $D$. 
The situation is as shown below (without loss of generality):

\begin{center}
    \begin{tikzpicture}[scale=.8,transform shape]
        \node[vertex,draw] at (0,0) (1) {};
        \node[vertex,draw] at (1,0) (2) {};
        
        \draw (2) -- (1)  -- ++(135:1cm) node[vertex,draw] (4) {};
        \draw (1)  -- ++(225:1cm) node[vertex,draw] (5) {};
        \draw (-.75, 0) circle (1cm);
        \draw (1.75, 0) circle (1cm);
        
        \node at (.5, .25) {$e$};
        \node at (-.75, -1.25) {bank $A$};
        \node at (1.75, -1.25) {bank $B$};
        
        \draw[gray,very thick,line width=2pt,opacity=.3] (5) -- ++(-90:.25cm)  node {};
        \draw[gray,very thick,line width=2pt,opacity=.3] (5) -- ++(180:.25cm)  node {};
        \draw[gray,very thick,line width=2pt,opacity=.3] (4) -- ++(90:.25cm)  node {};
        \draw[gray,very thick,line width=2pt,opacity=.3] (4) -- ++(180:.25cm)  node {};
        \draw[gray,very thick,line width=2pt,opacity=.3] (2) -- ++(45:.25cm)  node {};
        \draw[gray,very thick,line width=2pt,opacity=.3] (2) -- ++(-45:.25cm)  node {};
    \end{tikzpicture} 
\end{center}

If we remove from $G$ the $K_{1,3}$ in $D$ that contains $e$, 
then summing the terms of the degree sequence of $G[V(B)]$ yields  
$ 2 + 3(|V(B)|-1) = 2|E(B)|,$
which means that $2|E(B)|\equiv 2\pmod{3}$, so  $|E(B)|\not\equiv 0\pmod{3}$
and therefore $B$ admits no decomposition into components of size three. 
The very same argument 
shows that if $e$ belongs to a
$P_4$ in $D$, then it must be its middle edge, which
completes the proof.\qed 
\end{proof}

\begin{lemma}\label{lemma:decomposition-with-marked-edges-reduces-to-main-problem}
    Let $(G, M)$ be an instance of {\sc {\clawtrianglepathdecomp} with
      marked edges}, and $G'$ be the graph obtained by attaching a
    co-fish to every edge in $M$. Then $G$ can be decomposed iff $G'$ admits a 
\clawpathdecomp.
\end{lemma}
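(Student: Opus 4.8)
The plan is to prove both directions of the equivalence by giving explicit translations between decompositions of $G$ (respecting the marked-edge constraints) and {\clawpathdecomp}s of $G'$, using \Cref{lemma:bridge-must-be-middle-edge-of-P4} to control how the bridge joining each co-fish gadget to $G$ must be covered. First I would examine the co-fish gadget (\Cref{fig:cubic-lobe}$(c)$) attached to an edge $e=\{a,b\}\in M$: attaching it subdivides $e$ with a new vertex, say $z$, and identifies $z$ with a degree-$1$ vertex of the co-fish; the resulting graph $G'$ is cubic, and the edge from $z$ into the body of the co-fish is a bridge in $G'$. I would then do a small finite case analysis of how a {\clawpathdecomp} of $G'$ can cover the co-fish: by \Cref{lemma:bridge-must-be-middle-edge-of-P4} the bridge out of $z$ must be the middle edge of a $P_4$, which pins down the two edges $\{z,a\}$ and $\{z,b\}$ (on the $G$-side, each must be an end-edge of that same $P_4$, so exactly one of $\{z,a\},\{z,b\}$ is used by the $P_4$ through the bridge, or the bridge-$P_4$ uses neither and a $K_3$/$K_{1,3}$ structure inside the co-fish absorbs the rest). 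Working out the (few) internally consistent ways to decompose the co-fish's own edges — it contains a $K_3$ and a pendant structure — will show that, up to symmetry, the gadget "exports" to $G$ exactly the constraint encoded by $M$: namely that in the induced decomposition of $G$, the edge $e$ (i.e. the pair $\{z,a\},\{z,b\}$ contracted back) is never the middle edge of a $P_4$, and any $K_3$ of $G$ using $e$ is realised only when the co-fish can supply the missing third edge — which matches "one or two edges of the $K_3$ lie in $M$".

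For the forward direction, given a valid decomposition $D$ of $(G,M)$, I would build a {\clawpathdecomp} $D'$ of $G'$ as follows: keep every $K_{1,3}$ and $P_4$ of $D$ not touching a marked edge verbatim; for each $K_3\in D$, at least one of its edges is in $M$, so I reroute that $K_3$ using the extra edges supplied by the attached co-fish(es) — concretely, the co-fish is designed so that a $K_3$ on two marked edges plus the two co-fish bodies assemble into $K_{1,3}$'s and $P_4$'s, and a $K_3$ on one marked edge plus one co-fish body assembles similarly; for each $P_4\in D$ with a marked edge as an \emph{end} edge, that edge now runs through the subdivision vertex $z$ and gets extended into the co-fish, again completable because the gadget has the right edge count ($|E(\text{co-fish})|\equiv 0\bmod 3$ once the shared edge is accounted for). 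The marked-edge condition "no $e\in M$ is the middle edge of a $P_4$" is exactly what guarantees this local completion is always possible, since a $P_4$ whose middle edge were $e$ would split the co-fish's contribution in an uncompletable way. Conversely, given a {\clawpathdecomp} $D'$ of $G'$, I restrict to the edges of $G$ (undoing each subdivision by re-contracting $z$): the case analysis above shows the restriction is a well-defined {\clawtrianglepathdecomp} $D$ of $G$, that no marked edge is a middle edge of a $P_4$ in $D$, and that every $K_3$ in $D$ has one or two edges in $M$ — because a $K_3$ in $D$ can only arise from co-fish material, which is only attached along marked edges.

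The main obstacle I anticipate is the co-fish case analysis itself: I must verify that the gadget is \emph{rigid} enough — that there is no "rogue" {\clawpathdecomp} of $G'$ which, when restricted to $G$, uses a marked edge as a middle edge of a $P_4$ or produces a $K_3$ with no edge in $M$ — and simultaneously \emph{flexible} enough that every legal configuration on the $G$-side extends. This is a bounded-size check (the co-fish has $6$ edges), but it has to be carried out carefully for all three interface types (marked edge inside a $K_3$ of $D$, as an end edge of a $P_4$, or as an edge of a $K_{1,3}$ — though this last is forbidden for marked edges by the problem definition), and one has to keep track of the parity/counting argument (à la \Cref{lemma:bridge-must-be-middle-edge-of-P4}) that rules out the bad bridge behaviours. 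A secondary point to handle cleanly is that $G'$ is genuinely cubic and connected so that earlier results apply, and that attaching co-fishes to distinct marked edges does not create unexpected interactions — which holds because each co-fish communicates with the rest of $G'$ only through its single bridge.
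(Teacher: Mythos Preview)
Your overall strategy matches the paper's: use \Cref{lemma:bridge-must-be-middle-edge-of-P4} to pin down the bridge edge of each attached co-fish, then do a finite case analysis in both directions. However, there are two concrete mistakes in your setup that would make the argument go wrong as written.

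First, you say that a marked edge lying in a $K_{1,3}$ ``is forbidden for marked edges by the problem definition''. It is not. The constraints on $(G,M)$ are only that no marked edge be the middle edge of a $P_4$ and that each $K_3$ in $D$ carry one or two marked edges; a marked edge sitting in a $K_{1,3}$ is perfectly legal, and the paper's forward direction treats this as an explicit case. If you drop it, your construction of $D'$ from $D$ is incomplete.

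Second, in analysing the reverse direction you allow the possibility that the bridge-$P_4$ ``uses neither'' of $\{z,a\},\{z,b\}$. This cannot happen: $z$ has exactly the three neighbours $a$, $b$, and the co-fish vertex across the bridge, and since the bridge is the \emph{middle} edge of that $P_4$, the $P_4$ must pick up exactly one of $\{z,a\},\{z,b\}$ on the $G$-side. The paper's argument hinges on this: one half of the subdivided marked edge is absorbed by the bridge-$P_4$, and the remaining half $\{z,w\}$ (say) is then either an edge of a $K_{1,3}$ centred at $w$ or an end edge of a second $P_4$; replacing $\{z,w\}$ by $\{v,w\}$ yields respectively a $K_{1,3}$ or a $P_4$/$K_3$ in $G$. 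Your phrasing ``a $K_3$ in $D$ can only arise from co-fish material'' is also misleading---the $K_3$ is made of $G$-edges, and it appears precisely when that second $P_4$ closes up under the contraction; you still need to check it cannot have three marked edges (it cannot, since its two non-subdivided edges exist as single edges of $G'$ and hence were unmarked in $G$). Once these two points are fixed, your plan coincides with the paper's proof.
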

\begin{proof}
We prove each direction separately.
\begin{enumerate}
    \item [$\Rightarrow$:] we show how to transform a decomposition
      $D$ of $(G, M)$ into a decomposition $D'$ of $G'$. The subgraphs
      in $D$ that have no edge in $M$ are not modified. For the other
      subgraphs,  
    we distinguish between four cases:
    \begin{enumerate}
    \item if an edge of $M$ belongs to a $K_{1,3}$ in $D$, then attaching a co-fish does not prevent us from adapting the 
    decomposition of $G$ in $G'$: 

    \begin{center}
        \begin{tikzpicture}[scale=.5]
            \node[vertex,draw] (f) at (0,0) {};
            \draw[densely dotted] (f) -- ++(90:2cm) node [vertex,draw,solid] (1) {};
            \draw (f) -- ++(210:2cm) node [vertex,draw] (3) {};
            \draw (f) -- ++(330:2cm) node [vertex,draw] (4) {};

            \begin{scope}[every path/.style={part}]
                \draw [Greenish] (1) to (f) -- (3);
                \draw [Greenish] (f) to (4);
            \end{scope}

            \draw[gray,very thick,line width=2pt,opacity=.3] (1) -- ++(30:.5cm)  node {};
            \draw[gray,very thick,line width=2pt,opacity=.3] (1) -- ++(150:.5cm)  node {};

            \draw[gray,very thick,line width=2pt,opacity=.3] (3) -- ++(180:.5cm)  node {};
            \draw[gray,very thick,line width=2pt,opacity=.3] (3) -- ++(270:.5cm)  node {};

            \draw[gray,very thick,line width=2pt,opacity=.3] (4) -- ++(0:.5cm)  node {};
            \draw[gray,very thick,line width=2pt,opacity=.3] (4) -- ++(270:.5cm)  node {};
                
        \end{tikzpicture}
        \begin{tikzpicture}[scale=.5]
            \draw[->] (-4.5,0) -- (-2,0);
            \cofish;
            \draw (f) -- ++(90:1cm) node [vertex,draw] (1) {};
            \draw (f) -- ++(270:1cm) node [vertex,draw] (2) {};
            \draw (2) -- ++(210:2cm) node [vertex,draw] (3) {};
            \draw (2) -- ++(330:2cm) node [vertex,draw] (4) {};

            \begin{scope}[every path/.style={part}]
                
                \draw[Greenish] (b) to (c);
                \draw[Greenish] (c) to (e);
                \draw[Greenish] (c) to (d);
                \draw[Redish] (b) to (a) -- (f) to (1);
                \draw[Blueish] (b) to (e) -- (d) to (a);

                \draw [Greenish] (f) to (2) -- (3);
                \draw [Greenish] (2) to (4);
            \end{scope}

            \draw[gray,very thick,line width=2pt,opacity=.3] (1) -- ++(30:.5cm)  node {};
            \draw[gray,very thick,line width=2pt,opacity=.3] (1) -- ++(150:.5cm)  node {};

            \draw[gray,very thick,line width=2pt,opacity=.3] (3) -- ++(180:.5cm)  node {};
            \draw[gray,very thick,line width=2pt,opacity=.3] (3) -- ++(270:.5cm)  node {};

            \draw[gray,very thick,line width=2pt,opacity=.3] (4) -- ++(0:.5cm)  node {};
            \draw[gray,very thick,line width=2pt,opacity=.3] (4) -- ++(270:.5cm)  node {};
        \end{tikzpicture}
    \end{center}

    \item if  an edge of $M$ belongs to a $P_{4}$ in $D$, then it is an extremity of that $P_4$ and attaching a co-fish does not prevent us from adapting that part of the decomposition:

    \begin{center}
    \begin{tikzpicture}[scale=.5]
        \node[vertex,draw] (f) at (0,0) {};
        \draw[densely dotted] (f) -- (270:2cm) node [vertex,draw,solid] (1) {} ;
        \draw (1)
                -- ++(0:2cm) node [vertex,draw] (2) {}
                -- ++(90:2cm) node [vertex,draw] (3) {};

        \draw[gray,very thick,line width=2pt,opacity=.3] (f) -- ++(30:.5cm)  node {};
        \draw[gray,very thick,line width=2pt,opacity=.3] (f) -- ++(150:.5cm)  node {};

        \draw[gray,very thick,line width=2pt,opacity=.3] (1) -- ++(225:.5cm)  node {};

        \draw[gray,very thick,line width=2pt,opacity=.3] (2) -- ++(-45:.5cm)  node {};

        \draw[gray,very thick,line width=2pt,opacity=.3] (3) -- ++(30:.5cm)  node {};
        \draw[gray,very thick,line width=2pt,opacity=.3] (3) -- ++(150:.5cm)  node {};
            \begin{scope}[every path/.style={part}]
                \draw [Greenish] (f) to (1) -- (2) to (3);
            \end{scope}

    \end{tikzpicture}
    \begin{tikzpicture}[scale=.5]
        \draw[->] (-4,0) -- (-1.5,0);
        \cofish;
        \draw (f) -- ++(90:1cm) node [vertex,draw] (4) {};
        \draw (f) -- ++(270:1cm) node [vertex,draw] (1) {} 
                -- ++(0:2cm) node [vertex,draw] (2) {}
                -- ++(90:2cm) node [vertex,draw] (3) {};
            
            \begin{scope}[every path/.style={part}]
                
                \draw[Greenish] (b) to (c);
                \draw[Greenish] (c) to (e);
                \draw[Greenish] (c) to (d);
                \draw[Redish] (b) to (a) -- (f) to (4);
                \draw[Blueish] (b) to (e) -- (d) to (a);

                \draw [Greenish] (f) to (1) -- (2) to (3);
            \end{scope}

        \draw[gray,very thick,line width=2pt,opacity=.3] (4) -- ++(30:.5cm)  node {};
        \draw[gray,very thick,line width=2pt,opacity=.3] (4) -- ++(150:.5cm)  node {};

        \draw[gray,very thick,line width=2pt,opacity=.3] (1) -- ++(225:.5cm)  node {};

        \draw[gray,very thick,line width=2pt,opacity=.3] (2) -- ++(-45:.5cm)  node {};

        \draw[gray,very thick,line width=2pt,opacity=.3] (3) -- ++(30:.5cm)  node {};
        \draw[gray,very thick,line width=2pt,opacity=.3] (3) -- ++(150:.5cm)  node {};
    \end{tikzpicture}
    \end{center}
    \item if a $K_3$ in $D$ has one edge in $M$, we can adapt the partition as follows:
    \begin{center}
        \begin{tikzpicture}[scale=.5]
            \path (180:1.5cm) node [vertex,draw] (1) {} ;
            \path (300:1.5cm) node [vertex,draw] (2) {};
            \path (60:1.5cm) node [vertex,draw] (3) {};
            \draw (1) -- (2);
            \draw[densely dotted] (2) -- (3);
            \draw (3) -- (1);
            \draw[gray,very thick,line width=2pt,opacity=.3] (1) -- ++(180:.5cm)  node {};
            \draw[gray,very thick,line width=2pt,opacity=.3] (2) -- ++(-45:.5cm)  node {};
            \draw[gray,very thick,line width=2pt,opacity=.3] (3) -- ++(45:.5cm)  node {};
            \begin{scope}[every path/.style={part}]
                \draw[Greenish] (1) to (2) -- (3) to (1);
            \end{scope}

        \begin{scope}[xshift=160pt]
            \path (180:1.5cm) node [vertex,draw] (1) {} ;
            \path (300:1.5cm) node [vertex,draw] (2) {};
            \path (60:1.5cm) node [vertex,draw] (3) {};
            \draw (1) -- (2) -- (3) -- (1);
            \draw[->] (-4,0) -- (-2.5,0);
            \draw[gray,very thick,line width=2pt,opacity=.3] (1) -- ++(180:.5cm)  node {};
            \draw[gray,very thick,line width=2pt,opacity=.3] (2) -- ++(-45:.5cm)  node {};
            \draw[gray,very thick,line width=2pt,opacity=.3] (3) -- ++(45:.5cm)  node {};
            \begin{scope}[rotate=180,xshift=-78pt]
                \cofish;
            
            \begin{scope}[every path/.style={part}]
                
                \draw[Blueish] (b) to (e) -- (d) to (a);
                \draw[Greenish] (c) to (b);
                \draw[Greenish] (c) to (e);
                \draw[Greenish] (c) to (d);
                \draw[Redish] (b) to (a) -- (f) to (3);
                
                \draw[Greenish] (3) to (1) -- (2) to (f);
            \end{scope}
            \end{scope}
        \end{scope}
        \end{tikzpicture}
    \end{center}
    \item if a $K_3$ in $D$ has two edges in $M$, we can adapt the partition as follows:
    \begin{center}
        \begin{tikzpicture}[scale=.5]
            
            \node[minimum size=1.5cm,regular polygon,regular polygon sides=3] (square) {};

            \foreach \x/\lab in {1/a,2/b,3/c} 
                \node[vertex,draw] at (square.corner \x) (\x) {};

            \draw[densely dotted] (3) -- (1) -- (2);
            \draw (2) -- (3);

            \begin{scope}[every path/.style={part}]
                \draw[Greenish] (1) to (2) -- (3) to (1);
            \end{scope}

            \draw[gray,very thick,line width=2pt,opacity=.3] (1) -- ++(90:.5cm)  node {};
            \draw[gray,very thick,line width=2pt,opacity=.3] (3) -- ++(-45:.5cm)  node {};
            \draw[gray,very thick,line width=2pt,opacity=.3] (2) -- ++(-135:.5cm)  node {};

            \begin{scope}[xshift=180pt]

                \draw[->] (-4,0) -- (-2.5,0);

                \node[draw,minimum size=1.5cm,regular polygon,regular polygon sides=3] (square) {};

                \foreach \x/\lab in {1/a,2/b,3/c} 
                    \node[vertex,draw] at (square.corner \x) (\x) {};

                \draw[gray,very thick,line width=2pt,opacity=.3] (1) -- ++(90:.5cm)  node {};
                \draw[gray,very thick,line width=2pt,opacity=.3] (3) -- ++(-45:.5cm)  node {};
                \draw[gray,very thick,line width=2pt,opacity=.3] (2) -- ++(-135:.5cm)  node {};

                \begin{scope}[rotate=-45,xshift=-78pt,yshift=-5pt]
                    \cofish;
                    
                        \begin{scope}[every path/.style={part}]
                            \draw[Blueish] (b) to (e) -- (d) to (a);
                            \draw[Greenish] (b) to (c);
                            \draw[Greenish] (c) to (e);
                            \draw[Greenish] (c) to (d);
                            \draw[Redish] (b) to (a) -- (f) to (1);
                            \draw[Blueish] (f) to (2);
                        \end{scope}
                \end{scope}

                \begin{scope}[rotate=-135,xshift=-78pt,yshift=5pt]
                    \cofish;
                    
                        \begin{scope}[every path/.style={part}]
                            \draw[Blueish] (a) to (b) -- (d) to (e);
                            \draw[Redish] (b) to (c);
                            \draw[Redish] (c) to (e);
                            \draw[Redish] (c) to (d);
                            \draw[Greenish] (d) to (a) -- (f) to (1);
                            \draw[Blueish] (f) to (3);
                        \end{scope}
                \end{scope}

                \begin{scope}[every path/.style={part}]
                    \draw[Blueish] (3) to (2);
                \end{scope}
            \end{scope}
        \end{tikzpicture}
    \end{center}
    \end{enumerate}
    \item [$\Leftarrow$:] we now show how to transform any {\clawpathdecomp} $D'$ of $G'$ into a decomposition of $(G, M)$. Again, the only parts of $D'$ that will need adapting are those connected to the co-fishes that we inserted when transforming $G$ into $G'$. 
Since the leaf $u$ of the co-fish we inserted has a neighbour $x$ such that $\{u,x\}$ is a bridge in $G'$, $\{u,x\}$ is the middle edge of a $P_4$ in $D'$ (\Cref{lemma:bridge-must-be-middle-edge-of-P4}) and we may therefore assume without loss of generality that our starting point in $G'$ is as follows:
    \begin{center}
        \begin{tikzpicture}[scale=.5]
            \begin{scope}
                \cofish;    
            \end{scope}
            \node at (f.north) [label=below:$u$] {};
            \node at (a.south) [label=above:$x$] {};
            \draw (f) -- ++( 45:1.5cm) node[vertex,draw,label=below:$v$] (v) {};
            \draw (f) -- ++(-45:1.5cm) node[vertex,draw,label=above:$w$] (w) {};

            \draw[gray,very thick,line width=2pt,opacity=.3] (v) -- ++(90:.5cm)  node {};
            \draw[gray,very thick,line width=2pt,opacity=.3] (v) -- ++(0:.5cm)  node {};
            \draw[gray,very thick,line width=2pt,opacity=.3] (w) -- ++(-90:.5cm)  node {};
            \draw[gray,very thick,line width=2pt,opacity=.3] (w) -- ++(0:.5cm)  node {};
            
            \begin{scope}[every path/.style={part}]
                \draw[Greenish] (c) to (b);
                \draw[Greenish] (c) to (e);
                \draw[Greenish] (c) to (d);
                \draw[Redish] (b) to (a) -- (f) to (v);
                \draw[Blueish] (a) to (d) -- (e) to (b);
            \end{scope}
                \node at (-2,0) {$G':$};
            \draw[->] (3.5,0) -- (5.5,0);
            \begin{scope}[xshift=300pt,rotate=-90]
                \node[vertex,draw,label=above:$v$] at (-1.05,-2.56) (v2) {};
                \node[vertex,draw,label=below:$w$] at (1.05,-2.56) (w2) {};

                \draw[gray,very thick,line width=2pt,opacity=.3] (v2) -- ++(135:.5cm)  node {};
                \draw[gray,very thick,line width=2pt,opacity=.3] (v2) -- ++(-135:.5cm)  node {};
                \draw[gray,very thick,line width=2pt,opacity=.3] (w2) -- ++(45:.5cm)  node {};
                \draw[gray,very thick,line width=2pt,opacity=.3] (w2) -- ++(-45:.5cm)  node {};
                \draw[densely dotted] (v2) -- (w2);
                \node at (0,-4) {$G:$};
            \end{scope}     
        \end{tikzpicture} 
    \end{center}
    with $\{v,w\}\not\in E(G')$ since $G$ is simple; therefore $\{u,w\}$ cannot belong to a $K_3$ in $G'$, and we have two cases to consider:
    \begin{enumerate}
        \item if $\{u, w\}$ belongs to a $K_{1,3}$ in $D'$, that $K_{1,3}$ can be mapped onto a $K_{1,3}$ in $D$ by replacing $\{u,w\}$ with $\{v,w\}$;
        \item otherwise, $\{u,w\}$ is an extremal edge of a $P_4$ in $D'$; since $\{u,w\}\not\in E(G)$, either that edge will remain in a $P_4$ when removing the co-fish and replacing $\{u,w\}$ with $\{v, w\}$, or it will end up in a $K_3$ with either one or two marked edges. Either way, the part can be added as such to $D$.\qed
    \end{enumerate}
\end{enumerate}
\end{proof}

We now show that we can restrict our attention to the following variant of {\sc {\clawtrianglepathdecomp} with marked edges}. We say a graph is \emph{{\almostsubcubic}} if its vertices have degree only 2 or 3.

\decisionproblem{{\almostsubcubic} {\clawtrianglepathdecomp} with marked edges}{a {\almostsubcubic} graph $G=(V,E)$ and a subset $M\subseteq E$ of edges.}{
does $G$ admit a {\clawtrianglepathdecomp} $D$ such that no edge
 in $M$ is the middle edge of a $P_4$ in $D$ and such that every $K_3$
 in $D$ has either one or two edges in $M$?
}

The following observation will help.

\begin{observation}\label{obs:subcubic-graph-num-leaves-deg-two}
    Let $G$ be a {\almostsubcubic} graph with $|V_2|$ degree-$2$ vertices. 
    If $G$ is $\{K_{1,3}, K_3, P_4\}$-decomposable, then $|V_2|\equiv 0\pmod{3}$.
\end{observation}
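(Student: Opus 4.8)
The plan is a short edge-counting argument modulo $3$. The key facts to combine are: (i) any $\{K_{1,3},K_3,P_4\}$-decomposition partitions $E(G)$ into parts of size exactly three, so decomposability forces $|E(G)|\equiv 0\pmod 3$; and (ii) the handshaking lemma relates $|E(G)|$ to the degree sequence of the {\almostsubcubic} graph $G$. First I would write $V=V_2\cup V_3$, where $V_i$ denotes the set of degree-$i$ vertices, so that $2|E(G)|=2|V_2|+3|V_3|$. Since the left-hand side and $2|V_2|$ are both even, $3|V_3|$ is even, hence $|V_3|$ is even and $\tfrac{|V_3|}{2}$ is an integer; thus $|E(G)|=|V_2|+3\cdot\tfrac{|V_3|}{2}$.

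Reducing this equality modulo $3$ gives $|E(G)|\equiv|V_2|\pmod 3$, and combining with $|E(G)|\equiv 0\pmod 3$ from (i) yields $|V_2|\equiv 0\pmod 3$, which is the claim. The only point needing the slightest care — and it is hardly an obstacle — is observing that $|V_3|$ is even before dividing by two, since that is precisely what makes the term $3\cdot\tfrac{|V_3|}{2}$ an honest multiple of $3$ and lets it drop out of the congruence.
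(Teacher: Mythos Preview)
Your proof is correct and follows essentially the same approach as the paper: use that decomposability forces $|E|\equiv 0\pmod 3$, combine with the handshaking identity $2|V_2|+3|V_3|=2|E|$, and conclude $|V_2|\equiv 0\pmod 3$. The paper reaches the conclusion slightly more directly by reducing $2|V_2|+3|V_3|=2|E|$ modulo~$3$ to get $2|V_2|\equiv 0\pmod 3$ (and using that $2$ is invertible mod~$3$), whereas you first argue $|V_3|$ is even in order to divide through by~$2$; this extra step is harmless but unnecessary.
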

\begin{proof}
If $G=(V,E)$ admits a {\clawtrianglepathdecomp}, then $|E|\equiv 0\pmod{3}$. 
Let $V_2$ and $V_3$ be the subsets of vertices of degree $2$ and $3$ in $G$. Then $2|V_2|+3|V_3|=2|E|$, so $2|V_2|\equiv 0\pmod{3}$.
\qed
\end{proof}
We prove that allowing degree-2 vertices does not make the problem substantially more difficult, by adding the following gadgets until all vertices have degree 3. 
    Let $(G, M)$ be an instance of {\sc {\almostsubcubic} {\clawtrianglepathdecomp} with marked edges}, 
    where $G$ has at least three degree-2 vertices $t_1,t_2,t_3$; by 
    \emph{adding a net over $\{t_1,t_2,t_3\}$}, we mean attaching a net by its leaves to $v_1$, $v_2$ and $v_3$ and adding the edges incident to the net's leaves to $M$ (see \Cref{fig:algo-reduction}$(a)$).

\begin{figure}[tb]
\setlength{\tabcolsep}{1.5em} 
\centering
\begin{tabular}{cc}
\begin{tikzpicture}
    \path (90:1cm) node [vertex,draw,label=above:{$t_1$}] (a) {};
    \path (210:1cm) node [vertex,draw,label=south west:{$t_2$}] (b) {};
    \path (330:1cm) node [vertex,draw,label=south east:{$t_3$}] (c) {};

    \draw[gray,very thick,line width=2pt,opacity=.3] (a) -- ++(390:.25cm)  node {};
    \draw[gray,very thick,line width=2pt,opacity=.3] (a) -- ++(150:.25cm)  node {};

    \draw[gray,very thick,line width=2pt,opacity=.3] (b) -- ++(165:.25cm)  node {};
    \draw[gray,very thick,line width=2pt,opacity=.3] (b) -- ++(285:.25cm)  node {};

    \draw[gray,very thick,line width=2pt,opacity=.3] (c) -- ++(15:.25cm)  node {};
    \draw[gray,very thick,line width=2pt,opacity=.3] (c) -- ++(265:.25cm)  node {};

    \draw[densely dotted] (a) -- ++(270:.5cm)  node[vertex,draw,solid,label=right:$v_1$] (u) {};
    \draw[densely dotted] (b) -- ++(30:.5cm)  node[vertex,draw,solid,label=below:$v_2$] (v) {};
    \draw[densely dotted] (c) -- ++(150:.5cm)  node[vertex,draw,solid,label=below:$v_3$] (w) {};
    \draw (u) -- (v) -- (w) -- (u);
\end{tikzpicture}
&
\begin{tikzpicture}
    \path (90:1cm) node [vertex,draw,label=above:{$t_1$}] (a) {};
    \path (210:1cm) node [vertex,draw,label=south west:{$t_2$}] (b) {};
    \path (330:1cm) node [vertex,draw,label=south east:{$t_3$}] (c) {};

    \draw[gray,very thick,line width=2pt,opacity=.3] (a) -- ++(390:.25cm)  node {};
    \draw[gray,very thick,line width=2pt,opacity=.3] (a) -- ++(150:.25cm)  node {};

    \draw[gray,very thick,line width=2pt,opacity=.3] (b) -- ++(165:.25cm)  node {};
    \draw[gray,very thick,line width=2pt,opacity=.3] (b) -- ++(285:.25cm)  node {};

    \draw[gray,very thick,line width=2pt,opacity=.3] (c) -- ++(15:.25cm)  node {};
    \draw[gray,very thick,line width=2pt,opacity=.3] (c) -- ++(265:.25cm)  node {};

    \draw[densely dotted] (a) -- ++(270:.5cm)  node[vertex,draw,solid,label=right:$v_1$] (u) {};
    \draw[densely dotted] (b) -- ++(30:.5cm)  node[vertex,draw,solid,label=below:$v_2$] (v) {};
    \draw[densely dotted] (c) -- ++(150:.5cm)  node[vertex,draw,solid,label=below:$v_3$] (w) {};
    \draw (u) -- (v) -- (w) -- (u);
    
    \begin{scope}[every path/.style={Redish,line width=4pt,opacity=.5,cap=round,join=round}]
      \draw (a.center) -- (u.center) --(v.center);
     \draw (u.center) --(w.center);

    \end{scope}            
    \begin{scope}[every path/.style={Greenish,line width=4pt,opacity=.5,cap=round,join=round}]
         \draw (b.center) -- (v.center) --(w.center) --(c.center);

    \end{scope}            
    \begin{scope}[every path/.style={Redish,line width=4pt,opacity=.5,cap=round,join=round}]

    \end{scope} 
    
\end{tikzpicture}
\\
$(a)$ & $(b)$ \\
\end{tabular}
\caption{Adding a net $(a)$ to a graph with degree-2 vertices $t_1,t_2,t_3$ (dotted edges belong to $M'$), and $(b)$ its only possible decomposition (up to symmetry).}
\label{fig:algo-reduction}
\end{figure}
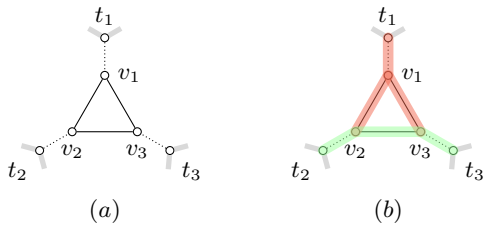

\begin{proposition}\label{prop:add-a-net}
    Let $(G, M)$ be an instance of {\sc {\almostsubcubic} {\clawtrianglepathdecomp} with marked edges}, 
    where $G$ has at least three degree-2 vertices $t_1,t_2,t_3$, and let  $(G', M')$ be the instance 
    obtained by adding a net to $(G, M)$.
     Then $(G', M')$ has three degree-2 vertices less than  $(G, M)$, and
     $(G, M)$ can be decomposed iff $(G', M')$ can be decomposed.
\end{proposition}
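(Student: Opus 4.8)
The plan is to check the degree count and the forward implication directly, and to spend the argument on the backward implication, whose content is that the added net can be decomposed in essentially only one way. Throughout, call the three new edges $\{t_i,v_i\}$ the \emph{legs} of the net; these are exactly the edges put into $M'$, and $E(G')=E(G)$ together with the three triangle edges $\{v_a,v_b\}$ and the three legs. The degree count is immediate: attaching the net raises each $t_i$ from degree $2$ to degree $3$, the three triangle vertices $v_1,v_2,v_3$ have degree $3$, and nothing else changes, so $G'$ is still {\almostsubcubic} and has exactly three fewer degree-$2$ vertices than $G$. For the forward direction I would take a valid decomposition $D$ of $(G,M)$ and add the two subgraphs of \Cref{fig:algo-reduction}$(b)$ --- the claw $K_{1,3}$ with centre $v_1$ and leaves $v_2,v_3,t_1$, and the path $t_2\,v_2\,v_3\,t_3$. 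These cover exactly the six new edges; the single new $P_4$ has the unmarked edge $\{v_2,v_3\}$ as its middle edge; no new $K_3$ is created; and each leg sits either in the claw or as an end edge of the path, so no marked edge becomes a middle edge. Hence $(G',M')$ is decomposable.

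For the backward direction, let $D'$ be a valid decomposition of $(G',M')$. The key claim is that, in $D'$, the six gadget edges are covered exactly by one claw $K_{1,3}(v_a;v_b,v_c,t_a)$ and one path $t_b\,v_b\,v_c\,t_c$ (for some relabelling of $\{1,2,3\}$), i.e. by the pattern of \Cref{fig:algo-reduction}$(b)$. Once that is shown, removing these two subgraphs from $D'$ leaves a partition of $E(G)$ by subgraphs all contained in $G$ --- no other subgraph of $D'$ can meet a $v_i$, since every edge at $v_i$ is a gadget edge --- and the three constraints for $(G,M)$ are inherited because $M=M'\cap E(G)$; together with the forward step this yields the equivalence. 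To prove the claim I would establish rigidity in two moves. First, the triangle $v_1v_2v_3$ is not a $K_3$ of $D'$, because its edges are all unmarked whereas every $K_3$ of $D'$ uses a marked edge --- this is exactly why the legs were put into $M'$. Second, I would enumerate the subgraphs of $D'$ that can contain a fixed triangle edge $\{v_i,v_j\}$: a $K_3$ is excluded; a $K_{1,3}$ must be centred at $v_i$ or $v_j$, hence occupies all three edges at its centre, i.e. is of the form $K_{1,3}(v_\cdot;v_\cdot,v_\cdot,t_\cdot)$; and a $P_4$ through $\{v_i,v_j\}$ either stays inside the gadget --- giving a path $v_a\,v_b\,v_c\,t_c$ (two triangle edges, one leg) or a path $t_a\,v_a\,v_b\,t_b$ (one triangle edge, two legs) --- or else it leaves the gadget through some $t_\ell$, in which case its middle edge is the leg $\{t_\ell,v_\ell\}\in M'$, which is forbidden.

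With this shortlist the argument closes by a short case distinction. If $D'$ uses a claw $K_{1,3}(v_a;v_b,v_c,t_a)$, it takes all three edges at $v_a$, and then the single remaining triangle edge $\{v_b,v_c\}$ can only be covered by the path $t_b\,v_b\,v_c\,t_c$ (a claw at $v_b$ or $v_c$ would reuse an edge at $v_a$, and a $v_\cdot v_\cdot v_\cdot t_\cdot$-path would need a second leftover triangle edge), which uses up exactly the three remaining gadget edges --- the pattern of \Cref{fig:algo-reduction}$(b)$. If $D'$ uses no such claw, every triangle edge is carried by a $P_4$; a $v_a\,v_b\,v_c\,t_c$-path occupies two triangle edges and then the third one cannot be covered (no claw is available, there is no room for a second triangle edge, and the only eligible $t_\cdot v_\cdot v_\cdot t_\cdot$-path would reuse the leg already spent by the first path), while using only $t_\cdot v_\cdot v_\cdot t_\cdot$-paths would force three of them to cover the three triangle edges and any two of those share a leg. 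Both subcases are impossible, so $D'$ decomposes the gadget exactly as in \Cref{fig:algo-reduction}$(b)$, which finishes the proof.

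I expect the enumeration to be the delicate part: proving that no subgraph of $D'$ containing a triangle edge can escape the gadget. This is precisely where marking the three legs is needed, since the only escape routes are $P_4$'s of the form $x\,t_\ell\,v_\ell\,v_j$, whose middle edge is the marked leg $\{t_\ell,v_\ell\}$; once these are excluded the gadget is rigid and everything reduces to bookkeeping over six edges.
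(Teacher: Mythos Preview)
Your proof is correct and follows the same strategy as the paper: verify the degree count, extend a decomposition of $(G,M)$ by the claw-plus-path pattern of \Cref{fig:algo-reduction}$(b)$ for the forward direction, and argue rigidity of the gadget for the backward direction. The paper's backward argument is considerably terser---it only notes that the central triangle cannot be a $K_3$ of $D'$, that marked legs cannot be middle edges, and that a $P_4$ of the shape $t_a v_a v_b v_c$ would force a marked middle edge elsewhere, then asserts the conclusion---whereas you carry out the full enumeration of subgraphs that can contain a triangle edge and close with an explicit case split on whether a claw at some $v_a$ occurs. Your version is more detailed but not genuinely different in approach.
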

\begin{proof}
By construction, $G'$ has fewer degree-2 vertices, since $t_1,t_2,t_3$ now have degree 3 instead of 2, other vertices of $G$ are unchanged, and new vertices $\{v_1,v_2,v_3\}$ have degree 3. We now prove the equivalence.
\begin{enumerate}
    \item [$\Rightarrow$:] given a decomposition $D$ for $(G, M)$, we only need to add the $K_{1,3}$ induced by  $\{v_1,t_1,v_2,v_3\}$ and the $P_4$ induced by $\{t_2,v_2,v_3,t_3\}$ to cover the edges of the added net in order to obtain a decomposition $D'$ for $(G', M')$ (see \Cref{fig:algo-reduction}$(b)$).

\item [$\Leftarrow$:] 

    we show that the only valid decompositions 
must include the choice we made in the proof of the forward direction. Indeed,
    the marked edges cannot be middle edges in a $P_4$, and the $K_3$ induced by $v_1$, $v_2$ and $v_3$ cannot appear as a $K_3$ in a decomposition. Moreover, no marked edge can be the extremity of a $P_4$ with two edges lying in the $K_3$, since this would force another marked edge to be the middle edge of a $P_4$. Therefore the only possible decomposition of the net is the one defined above (up to symmetry), and we can safely remove the $P_4$ and the $K_{1,3}$ from $D'$ while preserving the rest of the decomposition.
\end{enumerate}
\qed
\end{proof}

\begin{lemma}\label{lemma:subcubic-marked-edges-reduces-to-cubic-marked-edges}
    {\sc {\almostsubcubic} {\clawtrianglepathdecomp} with marked edges} $\leq_P$     {\sc {\clawtrianglepathdecomp} with marked edges}.
\end{lemma}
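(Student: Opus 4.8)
The plan is to use \Cref{prop:add-a-net} repeatedly in order to get rid of all degree-2 vertices. Given an instance $(G,M)$ of {\sc {\almostsubcubic} {\clawtrianglepathdecomp} with marked edges}, write $V_2$ for its set of degree-2 vertices. First I would dispose of the case $|V_2|\not\equiv 0\pmod 3$: by \Cref{obs:subcubic-graph-num-leaves-deg-two} no such instance is decomposable, so the reduction may simply output a fixed cubic no-instance, for example $K_4$ with all of its edges marked. Indeed, in that instance no $P_4$ can be used (its middle edge would be marked) and no $K_3$ can be used (it would carry three marked edges), so any valid decomposition would have to consist of $K_{1,3}$'s only; but $K_4$ is not bipartite, hence not $\{K_{1,3}\}$-decomposable by \Cref{prop:bipartite-cubic-graphs-are-easy-to-partition}, so $(K_4, E(K_4))$ is a no-instance.

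Otherwise $|V_2|=3k$ for some integer $k\geq 0$, and I would build a chain $(G,M)=(G_0,M_0),(G_1,M_1),\dots,(G_k,M_k)$ in which $(G_{i+1},M_{i+1})$ is obtained from $(G_i,M_i)$ by adding a net over an arbitrary triple of its degree-2 vertices. Each step is legitimate because the net gadget turns its three attachment points from degree 2 into degree 3, introduces only degree-3 vertices, and leaves every other vertex untouched; hence $G_i$ has exactly $3(k-i)$ degree-2 vertices, so for every $i<k$ there really are three degree-2 vertices available, and $G_k$ is cubic. Adding a net also keeps the graph simple (the new edges join new vertices, or a new vertex to an old one) and connected, so $(G_k,M_k)$ is a bona fide instance of {\sc {\clawtrianglepathdecomp} with marked edges}, and that is what the reduction returns.

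Correctness is then immediate from \Cref{prop:add-a-net} by induction on $i$: $(G_i,M_i)$ is decomposable if and only if $(G_{i+1},M_{i+1})$ is, so $(G,M)$ is decomposable iff $(G_k,M_k)$ is. As for efficiency, each net addition adds a constant number of vertices and edges and $k\leq |V(G)|/3$, so $(G_k,M_k)$ has size linear in that of $(G,M)$ and is plainly computable in polynomial time. I do not expect a genuine obstacle here: \Cref{prop:add-a-net} already carries all of the combinatorial content, and the only delicate point — making sure the process terminates with a cubic graph rather than getting stuck with one or two leftover degree-2 vertices — is exactly what the divisibility test provided by \Cref{obs:subcubic-graph-num-leaves-deg-two} takes care of up front.
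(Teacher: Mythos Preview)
Your proposal is correct and follows essentially the same approach as the paper: iterate \Cref{prop:add-a-net} to eliminate degree-2 vertices three at a time, using \Cref{obs:subcubic-graph-num-leaves-deg-two} to handle the case where the count is not a multiple of three. You are somewhat more explicit than the paper in specifying a concrete cubic no-instance for that case and in checking that the output remains simple and connected, but the underlying argument is identical.
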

\begin{proof}
Given an instance $(G, M)$ of {\sc {\almostsubcubic} {\clawtrianglepathdecomp} with marked edges},
create an instance $(G', M')$ by successively adding a net to any triple of degree-2 vertices, until no such triple remains. By \Cref{prop:add-a-net}, $(G, M)$ is decomposable iff $(G', M')$ is decomposable. Moreover, either $G'$ is cubic (hence $(G', M')$ is an instance of {\sc {\clawtrianglepathdecomp} with marked edges}), or $G$ is trivially a no-instance by Observation~\ref{obs:subcubic-graph-num-leaves-deg-two}
\qed
\end{proof}

Finally, we show that {\sc  {\almostsubcubic}  {\clawtrianglepathdecomp} with marked edges} is \NP-complete. 
Our reduction relies on the {\sc cubic planar monotone 1-in-3 satisfiability} 
problem~\cite{Moore:2001:HTP:2663034.2663094}:

\decisionproblem{cubic planar monotone 1-in-3 satisfiability}{
a Boolean formula $\phi=C_1\wedge C_2\wedge\cdots\wedge C_{n}$ without negations over a 
 set  $\Sigma = \{x_1, x_2, \ldots, x_m\}$, with exactly three distinct variables per 
 clause and where each literal appears in exactly three clauses; moreover, the 
graph with clauses and variables as 
vertices and edges joining clauses and the variables they contain is planar.
}{
does there exist an assignment of truth values $f : \Sigma \to \{\textsc{true}$, 
 $\textsc{false}\}$ such that exactly one literal is {\sc true} in every clause of $\phi$?}

\begin{theorem}\label{thm:decomposition-with-marked-edges-is-hard}
{\sc  {\almostsubcubic} {\clawtrianglepathdecomp} with marked ed\-ges} is \NP-com\-plete.
\end{theorem}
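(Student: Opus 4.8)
\emph{Membership in \NP} is immediate: a proposed decomposition is a certificate of linear size whose validity — each part a $K_{1,3}$, $K_3$ or $P_4$, no marked edge a middle edge of a $P_4$, every $K_3$ carrying one or two edges of $M$ — can be checked in polynomial time.

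For the hardness, the plan is to reduce from {\sc cubic planar monotone 1-in-3 satisfiability}. Given such a formula $\phi=C_1\wedge\cdots\wedge C_n$ over $x_1,\dots,x_m$ — three variables per clause, each variable in exactly three clauses — I build a degree-2,3 instance $(G_\phi,M)$ following the clause–variable incidence graph of $\phi$. Each clause is realised by a \emph{clause gadget}: a chordless $5$-cycle $c_1\cdots c_5$ that carries, at three of its vertices, a pendant edge to a new vertex $x$, which has two further edges to vertices $v,w$; the three edges at $x$ (the pendant and the two to $v,w$) are placed in $M$, and from $v$ and $w$ two more edges lead into the \emph{variable gadget} of the corresponding variable. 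The two $C_5$-vertices without a pendant have degree $2$ — the only reason $G_\phi$ is merely degree-2,3 rather than cubic — and at the end one checks that the number of degree-$2$ vertices is a multiple of $3$; otherwise $(G_\phi,M)$ is a trivial no-instance by \Cref{obs:subcubic-graph-num-leaves-deg-two}, and if necessary the variable gadgets are dimensioned so that this holds.

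Two local claims drive the proof. First, at each vertex $x$ the marked triple $\{c_i x,\,x v,\,x w\}$ leaves only two options: either these edges form a $K_{1,3}$ centred at $x$ (call this state $F$), or $x$ is not a $K_{1,3}$-centre, in which case — since a marked edge cannot be a middle edge of a $P_4$ and no triangle meets $x$ — the edges $xv$ and $xw$ are both end-edges of $P_4$'s running into the variable gadget, and the pendant $c_i x$ is used by a part all of whose remaining edges lie inside the $C_5$ (state $T$). Second, a short edge-count on the five edges of the $C_5$: no $K_{1,3}$ or $K_3$ fits inside a chordless $5$-cycle, so those edges are covered by $P_4$'s lying wholly inside ($3$ edges each), by $P_4$'s or $K_{1,3}$'s consuming one pendant ($2$ inside edges), or by the unique $P_4$ through the two adjacent pendant-vertices ($1$ inside edge, $2$ pendants); the equation $3a+2b+c=5$ with only three pendants available has a single feasible solution, which forces exactly one pendant to be consumed — so exactly one of the three $x$'s of a clause is in state $T$, and any one of the three may be chosen for that role. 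It then remains to design the variable gadget — a small graph with three ports — so that, under the marking rules, its valid decompositions are precisely the two in which all three ports carry the same state; this is again a bounded case analysis, and I expect it to be the most delicate point.

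Granting the claims, correctness is routine. From a 1-in-3 satisfying assignment $f$, put each port in state $T$ iff its variable is true; the variable-gadget claim lets every variable gadget be decomposed, the clause-gadget claim holds because $f$ makes exactly one literal true per clause, and the rest of each $C_5$ and of the wires is filled in as prescribed, giving a valid decomposition of $(G_\phi,M)$. Conversely, from any valid decomposition one reads $f(x_i)$ off the uniform state of the gadget of $x_i$; the clause-gadget claim then yields exactly one true literal per clause, so $f$ is a 1-in-3 model of $\phi$. The main obstacle, as noted, is the variable gadget: arranging its structure and marked edges so that \emph{only} the two uniform decompositions survive — no mixed state, and no $K_3$ violating the one-or-two-marked-edges rule slipping in — and checking that the clause–wire–gadget interfaces add no further freedom.
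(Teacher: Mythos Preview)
Your framework is the right one, and your clause gadget is exactly the paper's: a $C_5$ with three pendants to marked $K_{1,3}$-centres, arranged so that two of the pendant-bearing vertices are adjacent on the cycle. Your edge-count $3a+2b+c=5$ under the pendant constraint is correct and does pin down that exactly one pendant is absorbed into a $C_5$-touching part while the other two must be covered by the $K_{1,3}$'s at the corresponding $x$-vertices; this matches the paper's case analysis.

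The genuine gap is the variable gadget, which you explicitly leave undesigned and flag as ``the most delicate point''. In fact no separate gadget is needed, and this is the missing observation. In the paper's construction there is a \emph{single} vertex $x_i$ per variable, not one per occurrence: its three marked edges go directly to one vertex of each of the three clause $5$-cycles in which $x_i$ appears (this is where cubicity of the SAT instance is used). Because all three edges at $x_i$ are marked and $x_i$ has degree exactly~$3$, any valid decomposition either takes the whole star $K(x_i)$ as one $K_{1,3}$ (your state $F$, simultaneously for all three occurrences), or else every edge at $x_i$ is an end-edge of a part that continues into the corresponding clause $5$-cycle (your state $T$, again simultaneously). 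Consistency across occurrences is therefore automatic --- the star \emph{is} the variable gadget.

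Your description instead inserts an extra layer: per-occurrence centres $x$ with further vertices $v,w$ and ``wires'' into an as-yet-unbuilt three-port gadget. Collapsing this layer by identifying the three centres for the same variable into a single degree-$3$ vertex removes the difficulty entirely. Once you make this identification, your two local claims hold as stated, the graph is $K_3$-free (so the $K_3$-marking condition is vacuous), the only degree-$2$ vertices are the two unpendanted $C_5$-vertices per clause, and both directions of the equivalence go through exactly as you outlined.
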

\begin{proof}
We first show how to transform an instance $\phi=C_1\wedge C_2\wedge\cdots\wedge C_{n}$ of {\sc cubic planar monotone 1-in-3 satisfiability} into an instance $(G, M)$ of  {\sc {\almostsubcubic} {\clawtrianglepathdecomp} with marked edges}. The transformation proceeds by:
\begin{enumerate}
    \item mapping each variable $x_i$ onto a $K_{1,3}$ denoted by $K(x_i)$ and whose edges all belong to $M$;
    \item mapping each clause $C=\{x_i,x_j,x_k\}$ onto a cycle with five vertices in such a way that $K(x_i)$, $K(x_j)$ and $K(x_k)$ each have a leaf that coincides with a vertex of the cycle and exactly two such leaves are adjacent in the cycle.
\end{enumerate}
\Cref{fig:reduction-from-cubic-mono-one-in-three-sat} illustrates the construction, which yields a  {\almostsubcubic} graph. 
We now show that $\phi$ is satisfiable iff $(G, M)$ admits a decomposition.
\begin{enumerate}
    \item [$\Rightarrow$:] we apply the following rules for transforming a satisfying assignment for $\phi$ into a decomposition $D$ for $(G, M)$:
    \begin{itemize}
        \item if variable $x_i$ is set to {\sc false}, then the corresponding $K(x_i)$ is added as such to $D$; 
        \item otherwise, the three edges of $K(x_i)$ will be the meeting points of three different $K_{1,3}$'s in the decomposition, one of which will have two edges in the current clause gadget.
    \end{itemize}
    Two cases can be distinguished based on whether or not a leaf of $K(x_i)$ is adjacent to a leaf of $K(x_j)$ or $K(x_k)$, but in both cases the rest of the clause gadget yields a $P_4$ that we add as such to the decomposition (see \Cref{fig:reduction-from-cubic-mono-one-in-three-sat}$(b)$ and $(c)$).

    \item [$\Leftarrow$:] we now show how to convert a decomposition $D$ for $(G, M)$ into a satisfying truth assignment for $\phi$. First, we observe that $D$ must satisfy the following crucial structural property:
\begin{center}
{\bf For each clause $C=(x_i\vee x_j\vee x_k)$, exactly two subgraphs out of $K(x_i)$, $K(x_j)$ and $K(x_k)$ appear as $K_{1,3}$'s in $D$}.
\end{center}

Indeed, $G$ is $K_3$-free by construction, and:
        \begin{enumerate}
            \item if all of them appear as $K_{1,3}$'s in $D$, then the remaining five edges of the clause gadget cannot be decomposed;
            \item if only $K(x_i)$ appear as a $K_{1,3}$ in $D$, then $x_j$ --- without loss of generality --- must be a leaf either of a $K_{1,3}$ in $D$ with a center in the clause gadget or of a $P_4$ in $D$ with two edges in the clause gadget (the $P_4$ cannot connect $x_j$ and $x_k$, otherwise the rest of the gadget cannot be decomposed); in both cases, the remaining three edges of the clause gadget must form a $P_4$, thereby causing $K(x_k)$ to appear as a $K_{1,3}$ in $D$, a contradiction (a similar argument allows us to handle $K(x_j)$ and $K(x_k)$);
            \item finally, if none of them appear as $K_{1,3}$'s in $D$, then $x_i$ must be the leaf either of a $K_{1,3}$ in $D$ with a center in the clause gadget, or of a $P_4$ with two edges in the clause gadget; in both cases, the remaining three edges of the clause gadget must form a $P_4$ in $D$, which in turn makes it impossible to decompose the rest of the graph.
        \end{enumerate}
    Therefore, $D$ yields a satisfying assignment for $\phi$ in the following simple way: if $K(x_i)$ appears as a $K_{1,3}$ in $D$, set it to {\sc false}, otherwise set it to {\sc true}.
\qed
\end{enumerate}
\end{proof}

\begin{theorem}\label{corollary:clawpathdecomp-is-hard}
    {\sc \clawpathdecomp} is \NP-complete.
\end{theorem}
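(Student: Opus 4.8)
The plan is to establish membership in \NP{} and \NP-hardness separately. Membership is routine: a {\clawpathdecomp} of a cubic graph $G=(V,E)$ consists of exactly $|E|/3=|V|/2$ subgraphs, each of constant size, so it is a polynomial-size certificate, and checking that the listed subgraphs are pairwise edge-disjoint, cover $E$, and are each isomorphic to $K_{1,3}$ or to $P_4$ takes polynomial time.

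For hardness I would not carry out any new construction: I would simply compose the three reductions already built in this section and invoke the transitivity of $\leq_P$. Concretely, {\sc cubic planar monotone 1-in-3 satisfiability} is \NP-complete~\cite{Moore:2001:HTP:2663034.2663094}; by \Cref{thm:decomposition-with-marked-edges-is-hard} it reduces in polynomial time to {\sc {\almostsubcubic} {\clawtrianglepathdecomp} with marked edges}; by \Cref{lemma:subcubic-marked-edges-reduces-to-cubic-marked-edges} the latter reduces to {\sc {\clawtrianglepathdecomp} with marked edges}; and by \Cref{lemma:decomposition-with-marked-edges-reduces-to-main-problem}, attaching a co-fish to every marked edge reduces that problem to {\sc \clawpathdecomp}. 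Chaining these three polynomial-time reductions shows that {\sc \clawpathdecomp} is \NP-hard, hence \NP-complete.

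There is no genuine obstacle here, since all the combinatorial content lives in the preceding lemmas; the one point I would make sure to spell out is that each link of the chain hands the next problem a legal input. The instance produced by \Cref{thm:decomposition-with-marked-edges-is-hard} is {\almostsubcubic} by construction, and \Cref{lemma:subcubic-marked-edges-reduces-to-cubic-marked-edges} (relying on Observation~\ref{obs:subcubic-graph-num-leaves-deg-two}) turns it into a cubic instance. Finally, \Cref{lemma:decomposition-with-marked-edges-reduces-to-main-problem} must output a cubic graph: attaching a co-fish to an edge subdivides it with a degree-$2$ vertex $w$ and then identifies $w$ with the unique degree-$1$ vertex of the co-fish, raising $w$ to degree $3$; since the co-fish is cubic at all its other vertices and $G$ is cubic, $G'$ is a connected cubic graph, and hence a legitimate input to {\sc \clawpathdecomp}. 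With this checked, the composition of the three reductions goes through verbatim and the theorem follows.
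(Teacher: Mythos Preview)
Your proposal is correct and follows exactly the approach of the paper, which simply states that the result is immediate from \Cref{thm:decomposition-with-marked-edges-is-hard}, \Cref{lemma:subcubic-marked-edges-reduces-to-cubic-marked-edges}, and \Cref{lemma:decomposition-with-marked-edges-reduces-to-main-problem}. Your added remarks on \NP-membership and on each reduction producing a legal (cubic) input for the next are sound and only make the argument more self-contained.
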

\begin{proof}
    Immediate from \Cref{lemma:subcubic-marked-edges-reduces-to-cubic-marked-edges,lemma:decomposition-with-marked-edges-reduces-to-main-problem} and \Cref{thm:decomposition-with-marked-edges-is-hard}.\qed
\end{proof}

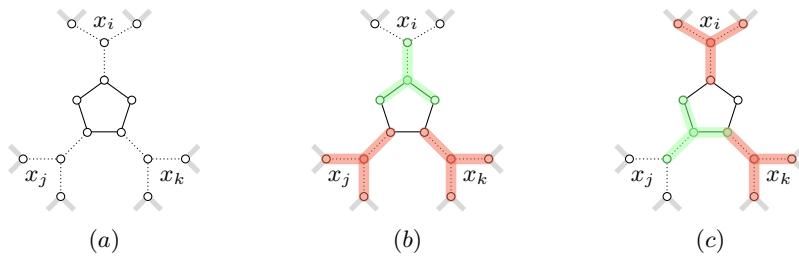
\begin{figure}[htbp]
\centering
\setlength{\tabcolsep}{2em} 
\begin{tabular}{ccc}
\begin{tikzpicture}
\varandclausegadget;
\end{tikzpicture}
&
\begin{tikzpicture}
    \varandclausegadget;
    \begin{scope}[every path/.style={Redish,line width=4pt,opacity=.5,cap=round,join=round}]
        \draw (C5below.corner 4) -- (u2.center) -- (v2.center);
        \draw (u2.center) -- (w2.center);   
    \end{scope}            
    \begin{scope}[every path/.style={Redish,line width=4pt,opacity=.5,cap=round,join=round}]
        \draw (C5below.corner 3) -- (u1.center) -- (v1.center);
        \draw (u1.center) -- (w1.center);   
    \end{scope}            
    \begin{scope}[every path/.style={Greenish,line width=4pt,opacity=.5,cap=round,join=round}]
        \draw (C5below.corner 1) -- (u.center);
        \draw (C5below.corner 2) -- (C5below.corner 1) -- (C5below.corner 5);   
    \end{scope}            
\end{tikzpicture}
&
\begin{tikzpicture}
    \varandclausegadget;
    \begin{scope}[every path/.style={Redish,line width=4pt,opacity=.5,cap=round,join=round}]
        \draw (C5below.corner 4) -- (u2.center) -- (v2.center);
        \draw (u2.center) -- (w2.center);   
    \end{scope}            
    \begin{scope}[every path/.style={Greenish,line width=4pt,opacity=.5,cap=round,join=round}]
        \draw (C5below.corner 3) -- (u1.center);
        \draw (C5below.corner 2) -- (C5below.corner 3) -- (C5below.corner 4);
    \end{scope}            
    \begin{scope}[every path/.style={Redish,line width=4pt,opacity=.5,cap=round,join=round}]
        \draw (C5below.corner 1) -- (u.center);
        \draw (v.center) -- (u.center) -- (w.center);   
    \end{scope}            
\end{tikzpicture}
\\
$(a)$ & $(b)$ & $(c)$ \\
\end{tabular}
\caption{$(a)$ 
Connecting clause and variable gadgets in the proof of \Cref{thm:decomposition-with-marked-edges-is-hard}; 
dotted edges belong to $M$. 
$(b), (c)$ Converting truth assignments into decompositions in the proof of  \Cref{thm:decomposition-with-marked-edges-is-hard}; the only variable set to {\sc true} is mapped onto a $K_{1,3}$ in the decomposition;  $(b)$ shows the case where the only variable set to {\sc true} --- namely, $x_i$ --- is such that $K(x_i)$ has no leaf adjacent to a leaf of $K(x_j)$ nor $K(x_k)$; $(c)$ shows the other case, where $x_j$ is set to {\sc true} and $K(x_i)$ and $K(x_k)$ have leaves made adjacent by the clause gadget.
}
\label{fig:reduction-from-cubic-mono-one-in-three-sat}
\end{figure}

A like-minded reduction\footnote[1]{See Appendix for details.} 
allows us to prove the hardness of {\sc \clawtrianglepathdecomp}.

\begin{theorem}\label{corollary:main-problem-is-hard}
        {\sc \clawtrianglepathdecomp} is \NP-complete, even on $K_3$-free graphs.
\end{theorem}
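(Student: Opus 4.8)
The plan is to follow the same three-step reduction used for \Cref{corollary:clawpathdecomp-is-hard}, but to arrange that every graph it produces, and in particular the final instance, is $K_3$-free. This is enough: a $K_3$-free graph contains no triangle as a subgraph, so on $K_3$-free graphs a $\{K_{1,3},K_3,P_4\}$-decomposition is nothing but a $\{K_{1,3},P_4\}$-decomposition, and likewise the conditions on $K_3$'s in the marked-edges variants become vacuous there. The starting reduction (\Cref{thm:decomposition-with-marked-edges-is-hard}) already outputs a $K_3$-free {\almostsubcubic} graph, so it is reused verbatim; triangles enter the chain only through the net gadget of \Cref{lemma:subcubic-marked-edges-reduces-to-cubic-marked-edges} and the co-fish of \Cref{lemma:decomposition-with-marked-edges-reduces-to-main-problem}, and both have to be replaced by $K_3$-free gadgets performing the same job.

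For the co-fish I would keep the pendant vertex and the bridge but swap the diamond ``head'' for a triangle-free one: take $K_{3,3}$ on parts $\{a_1,a_2,a_3\}$, $\{b_1,b_2,b_3\}$, delete the edge $\{a_1,b_1\}$, and add a new vertex $z$ adjacent to $a_1$ and $b_1$; attach this head to an edge $e$ exactly as the co-fish is attached, i.e. subdivide $e$ by a new vertex $f$ and join $f$ to $z$. The result is $K_3$-free, $\{z,f\}$ is a bridge, and its head side carries $10\equiv 1\pmod 3$ edges, so the edge-counting argument of \Cref{lemma:bridge-must-be-middle-edge-of-P4} forces $\{z,f\}$ to be the middle edge of a $P_4$ in any decomposition --- exactly the behaviour the co-fish supplied. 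One then reruns both directions of \Cref{lemma:decomposition-with-marked-edges-reduces-to-main-problem}: on the forward side, subcases $(c)$ and $(d)$ (those producing $K_3$'s) no longer arise, and a direct check shows that once $\{z,f\}$ is the middle of a $P_4$ the nine remaining head edges always decompose --- three $K_{1,3}$'s suffice --- so each of the remaining cases extends; on the backward side, the ``ends up in a $K_3$'' alternative disappears, leaving only the $K_{1,3}$ and $P_4$ cases, which are handled as before.

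For the cubic-ization step I would attach to each triple $t_1,t_2,t_3$ of degree-$2$ vertices a $K_3$-free gadget joined by fresh edges $\{t_i,v_i\}$ that are put into the marked set, and then re-prove the analogue of \Cref{prop:add-a-net}: every valid decomposition must use each $\{t_i,v_i\}$ ``cleanly'' (as a leaf edge of a $K_{1,3}$ centred inside the gadget, or as an extremal edge of a $P_4$ running into it), so that deleting the gadget restores a decomposition of $(G,M)$, while conversely any decomposition of $(G,M)$ extends. Counting edges across the cut formed by the three marked edges shows that, for a suitably built gadget, either all three are used cleanly or all three are coupled to $G$, and the coupled case is possible exactly when the gadget is $\{K_{1,3},P_4\}$-decomposable in isolation; so the gadget must be chosen to be \emph{not} self-decomposable (even with its internal marks) but to become decomposable once each $v_i$ is turned into the centre of a $K_{1,3}$. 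Finding such a gadget is the main obstacle, because the rigidity of the original net comes precisely from a forbidden triangle, and a naive triangle-free ``net'' --- for instance a $6$-cycle through $v_1,v_2,v_3$ together with a central hub joined to the other three cycle vertices --- fails, since it is self-decomposable into three $K_{1,3}$'s regardless of any edge marking; one therefore needs a more carefully engineered gadget, using internal marked edges and/or a small built-in non-decomposability obstruction of the kind exploited in \Cref{lemma:bridge-must-be-middle-edge-of-P4}, and the bulk of the proof is the finite case analysis confirming that it is simultaneously $K_3$-free, rigid in the above sense, and admits the required clean local decomposition. Chaining the three modified reductions as in \Cref{lemma:subcubic-marked-edges-reduces-to-cubic-marked-edges} and \Cref{corollary:clawpathdecomp-is-hard} then gives a polynomial reduction from {\sc cubic planar monotone 1-in-3 satisfiability} to {\sc \clawpathdecomp} on a $K_3$-free cubic graph; being $K_3$-free, that graph is $\{K_{1,3},K_3,P_4\}$-decomposable iff it is $\{K_{1,3},P_4\}$-decomposable, and since membership in \NP{} is obvious, this proves {\sc \clawtrianglepathdecomp} \NP-complete even on $K_3$-free graphs.
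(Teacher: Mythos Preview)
Your proposal has a genuine gap. You correctly observe that in the chain behind \Cref{corollary:clawpathdecomp-is-hard} only the net and the co-fish introduce triangles, and your $K_{3,3}$-based pendant is a sound triangle-free stand-in for the co-fish. But for the net you only write a specification (``$K_3$-free, rigid in the above sense, admits the required clean local decomposition'') and explicitly concede that constructing such a gadget is ``the main obstacle'' --- without providing one. The rigidity of the original net is precisely the forbidden inner $K_3$ together with the three marked pendants, and your own candidate ($C_6$ with a hub) collapses, as you note. This is not a routine check left to the reader; as it stands you have a plan with an unresolved lemma, not a proof.

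The paper does not try to fix the net; it removes the need for it. Instead of reusing the chain from \Cref{thm:decomposition-with-marked-edges-is-hard}, it starts from {\sc monotone not-all-equal $3$-satisfiability} (no degree bound on variables) and builds new gadgets: each variable $x_i$ with $k$ occurrences becomes an all-marked tree with $3k$ leaves, and each clause becomes a $P_7$ with two internal marked edges, with tree leaves identified to the seven path vertices in a prescribed pattern. Because arbitrary occurrence counts are absorbed by growing the variable tree, the resulting graph is \emph{directly cubic} and $K_3$-free --- there are no degree-$2$ vertices to eliminate, so no cubic-ization gadget is required at all. The substantive work is then a finite case analysis showing that the clause path plus the incident border edges decomposes iff the clause has at least one true and at least one false literal, and that each variable tree forces a consistent truth value. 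What this buys is exactly the step you could not complete: the net simply disappears from the argument. (As an aside, for the final marked-edge step the paper still attaches ordinary co-fishes, so the instance it produces is \emph{not} $K_3$-free; your $K_{3,3}$-based pendant would actually be needed on top of the paper's reduction to obtain the ``even on $K_3$-free graphs'' qualifier literally.)
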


\section{Conclusions and Future Work}

We provided in this paper a complete complexity landscape of {\sc\clawtrianglepathdecomp} for cubic graphs. 
A natural generalisation, already studied by other authors, is to study decompositions of $k$-regular graphs into connected components with $k$ edges for $k>3$. We would like to determine whether our positive results generalise in any way in that setting. 
It would also be interesting to identify tractable classes of graphs in the cases where those decomposition problems are hard, and to refine our characterisation of hard instances; for instance, does there exist a planarity-preserving reduction for \Cref{corollary:main-problem-is-hard}? Finally, we note that some applications relax the size constraint by allowing the use of graphs with at most $k$ edges in the decomposition~\cite{Munoz2011}; we would like to know how that impacts the complexity of the problems we study in this paper. 

\vspace*{-3mm}
{
\bibliographystyle{mynatstyle}
\bibliography{partitioning-cubic-graphs}
}

\clearpage
\appendix\label{app:omitted-proofs}
\section{\appendixname: Omitted Proofs}
\label{sect:omitted_proofs}

\newcommand{\newDecomposition}{\ensuremath{K_3}-free {\clawpathdecomp} with marked edges}

Our hardness proof uses ideas similar to those used for {\clawpathdecomp}, and is based on a slightly different intermediate problem. The structure is as follows:

\newcommand{\mononaethreesat}{monotone not-all-equal $3$-satisfiability}

\scalebox{.8}{
\begin{minipage}{\textwidth}
\begin{align*}
        &\ \mbox{\sc monotone not-all-equal $3$-satisfiability} & \\
 \leq_P &\ \mbox{\sc \newDecomposition } & \mbox{(\Cref{thm:newDecomposition-with-marked-edges-is-hard} page~\pageref{thm:newDecomposition-with-marked-edges-is-hard})}\\
 \leq_P &\ \mbox{\sc {\clawtrianglepathdecomp}} & \mbox{(\Cref{lemma:K3-free-decomposition-with-marked-edges-reduces-to-main-problem} page~\pageref{lemma:K3-free-decomposition-with-marked-edges-reduces-to-main-problem})}\\
\end{align*}
\end{minipage}
}

\noindent We use the following intermediate problem.

\decisionproblem{\newDecomposition}{a cubic, $K_3$-free 
  graph $G=(V,E)$ and a subset $M\subseteq E$ of edges.}{does $G$
  admit a {\clawpathdecomp} $D$ such that no edge in $M$ is
  the middle edge of a $P_4$ in $D$?} 

\begin{lemma}\label{lemma:K3-free-decomposition-with-marked-edges-reduces-to-main-problem}
    Let $(G, M)$ be an instance of {\sc $K_3$-free {\clawpathdecomp} with
      marked edges}, and $G'$ be the graph obtained by attaching a
    co-fish to every edge in $M$. Then $G$ can be decomposed iff $G'$ admits a 
\clawtrianglepathdecomp.
\end{lemma}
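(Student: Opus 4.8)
The plan is to mirror the proof of \Cref{lemma:decomposition-with-marked-edges-reduces-to-main-problem} almost verbatim, adapting it to the fact that the starting instance $(G,M)$ is already $K_3$-free and cubic, and that the target decomposition is allowed to use $K_3$'s. As in that earlier lemma, attaching a co-fish to an edge $e=\{u,v\}$ means subdividing $e$ with a new vertex $w$ and identifying $w$ with the leaf of a co-fish; the resulting graph $G'$ is cubic. The key observation to reuse is \Cref{lemma:bridge-must-be-middle-edge-of-P4}: the edge joining the co-fish's leaf to the rest of the graph is a bridge in $G'$, so in any $\{K_{1,3},K_3,P_4\}$-decomposition of $G'$ it must be the middle edge of a $P_4$. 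This pins down, up to symmetry, how the co-fish and its attachment edge are covered, exactly as in the $(\Leftarrow)$ direction of the earlier lemma.

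First I would prove the forward direction ($\Rightarrow$). Given a \clawpathdecomp{} $D$ of $(G,M)$ satisfying the marked-edge constraint, I keep every subgraph that avoids $M$ untouched, and for each marked edge $e$ I locally rewire the part(s) of $D$ meeting $e$ together with the edges of the attached co-fish into a valid \clawtrianglepathdecomp{}. Since $D$ is a \clawpathdecomp, only two cases occur at a marked edge $e$: either $e$ lies in a $K_{1,3}$ of $D$, or $e$ is an extremal edge of a $P_4$ of $D$ (it cannot be a middle edge by hypothesis, and there are no $K_3$'s in $D$). These are precisely cases (a) and (b) of the forward direction of \Cref{lemma:decomposition-with-marked-edges-reduces-to-main-problem}, and the same pictures (the first two displayed gadget figures there) show how to absorb the co-fish's five edges into a fixed local pattern of one $K_3$ and two $K_{1,3}$'s (or $P_4$'s); nothing about those local replacements used that the ambient decomposition forbade $K_3$'s elsewhere, so they transfer directly. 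This yields a \clawtrianglepathdecomp{} of $G'$.

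Next I would prove the backward direction ($\Leftarrow$). Let $D'$ be a \clawtrianglepathdecomp{} of $G'$. For each inserted co-fish, with leaf $u$ attached to $G'$ at a vertex $x$ via the bridge $\{u,x\}$, \Cref{lemma:bridge-must-be-middle-edge-of-P4} forces $\{u,x\}$ to be the middle edge of a $P_4$ in $D'$; the co-fish body then has, up to symmetry, exactly one way to be covered (the same fixed pattern as in the $(\Leftarrow)$ picture of \Cref{lemma:decomposition-with-marked-edges-reduces-to-main-problem}), leaving the two further neighbours $v,w$ of $u$ each joined to a dangling edge $\{u,v\}$, $\{u,w\}$ that must be completed outside the co-fish. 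Since $G$ is $K_3$-free, $\{v,w\}\notin E(G)$, so the edge $\{u,w\}$ in $G'$ (say) is the extremal edge of a $P_4$ or lies in a $K_{1,3}$ of $D'$; in either case, deleting the co-fish, un-subdividing, and replacing $\{u,w\}$ by the original marked edge $\{v,w\}$ turns that part into a valid $K_{1,3}$ or $P_4$ of a \clawpathdecomp{} of $G$ in which $\{v,w\}$ is never a middle edge of a $P_4$. Doing this for every co-fish produces the required decomposition of $(G,M)$.

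The main obstacle, and the one place I would be careful, is the bookkeeping in the backward direction: I must check that after removing a co-fish the re-attached marked edge genuinely ends up only as a $K_{1,3}$-edge or as an \emph{extremal} edge of a $P_4$ (never as a middle edge), so that the marked-edge constraint of the target problem $(G,M)$ is respected, and that the local choices made at different co-fishes do not interfere (they do not, since the only shared vertices are the original $t=v$, $w$, whose remaining two incident edges are handled independently). I would also note the trivial direction-agnostic facts that $G'$ is cubic and that the reduction is polynomial, so that \Cref{lemma:K3-free-decomposition-with-marked-edges-reduces-to-main-problem} indeed gives a polynomial-time many-one reduction. Everything else is a direct transcription of the gadget arguments already established for \Cref{lemma:decomposition-with-marked-edges-reduces-to-main-problem}.
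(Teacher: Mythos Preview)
Your approach matches the paper's: both directions are lifted from \Cref{lemma:decomposition-with-marked-edges-reduces-to-main-problem}, with only cases~(a) and~(b) needed in the forward direction (since the source decomposition $D$ contains no $K_3$), and with \Cref{lemma:bridge-must-be-middle-edge-of-P4} pinning down the bridge edge in the backward direction.

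Two small but real issues remain in your backward direction. First, the sentence ``since $G$ is $K_3$-free, $\{v,w\}\notin E(G)$'' is false: $\{v,w\}$ \emph{is} the original marked edge, so it certainly lies in $E(G)$. The reason $\{u,w\}$ cannot sit in a $K_3$ of $D'$ is that $\{v,w\}\notin E(G')$ (the marked edge was subdivided and $G$ is simple), exactly as in \Cref{lemma:decomposition-with-marked-edges-reduces-to-main-problem}. Where the $K_3$-freeness of $G$ genuinely enters is one step later: when you replace the extremal edge $\{u,w\}$ of a $P_4$ by $\{v,w\}$, you must rule out that the resulting walk closes up into a triangle in $G$; this is precisely the modification the paper singles out. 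Second, and more importantly, you never argue that the parts of $D'$ lying entirely outside every co-fish are $K_{1,3}$'s or $P_4$'s. A priori $D'$ is a $\{K_{1,3},K_3,P_4\}$-decomposition, so such a part could be a $K_3$; the paper disposes of this by observing that, because $G$ is $K_3$-free, the only triangles of $G'$ live inside the attached co-fishes, so no $K_3$ of $D'$ can survive the surgery. Without this observation your construction does not yet yield a $\{K_{1,3},P_4\}$-decomposition of $G$. (Minor asides: the co-fish has eight edges, not five, and the local replacements in cases~(a) and~(b) use no $K_3$.)
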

\begin{proof}
The proof of the forward direction is exactly the same as that of the forward direction of \Cref{lemma:decomposition-with-marked-edges-reduces-to-main-problem}. For the reverse direction, let $D'$ be a {\clawtrianglepathdecomp} of $G'$. The only $K_3$'s in $G'$ are those that belong to the co-fishes we inserted, so we only need to show that removing those co-fishes does not prevent us from adapting the decomposition of $G'$ in order to obtain a $\{K_{1,3}, P_4\}$-decomposition $D$ of $G$. The proof is similar to that of the reverse direction of \Cref{lemma:decomposition-with-marked-edges-reduces-to-main-problem}, with the following modification: since $G$ is $K_3$-free, we have $N_G(u)\cap N_G(w)=\emptyset$, so if $\{u,w\}$ is the extremal edge of a $P_4$ in $D'$, then it will map onto $\{v,w\}$ in $G$, where it will become the extremal edge of a $P_4$ in $D$ (as opposed to, possibly, a $K_3$ in the proof of \Cref{lemma:decomposition-with-marked-edges-reduces-to-main-problem}).\qed
\end{proof}

We give a reduction from the following \NP-complete variant of {\sc sat}~\cite{DBLP:conf/stoc/Schaefer78}:

\decisionproblem{\mononaethreesat}{
a Boolean formula $\phi=C_1\wedge C_2\wedge\cdots\wedge C_{n}$ without negations over a 
 set  $\Sigma = \{x_1, x_2, \ldots, x_m\}$, with exactly three distinct variables per 
 clause.
}{
does there exist an assignment of truth values $f : \Sigma \to \{\textsc{true}$, 
 $\textsc{false}\}$ such that exactly one or two literals are {\sc true} in every clause of $\phi$?}

\begin{theorem}\label{thm:newDecomposition-with-marked-edges-is-hard}
{\sc  \newDecomposition} is \NP-com\-plete.
\end{theorem}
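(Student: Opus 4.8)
The plan is to reduce {\sc\mononaethreesat} to {\sc\newDecomposition}, reusing the machinery behind \Cref{thm:decomposition-with-marked-edges-is-hard}. Membership in \NP\ is immediate, since a decomposition is a polynomial-size certificate whose validity (the pieces are $K_{1,3}$'s or $P_4$'s, they partition $E$, and no edge of $M$ is a middle edge of a $P_4$) is checkable in polynomial time. For the hardness, given $\phi=C_1\wedge\cdots\wedge C_{n}$ over $\Sigma=\{x_1,\ldots,x_m\}$, I would build an instance $(G,M)$ from two kinds of gadgets. Each variable $x_i$ becomes a \emph{variable gadget} whose edges all lie in $M$: a single marked claw $K(x_i)$ when $x_i$ occurs three times, and otherwise a cyclic chain of marked claws linked through degree-preserving, $K_3$-free connectors, with one ``port'' leaf per occurrence of $x_i$. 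Each clause $C=\{x_i,x_j,x_k\}$ becomes a \emph{clause gadget}, a short cycle three of whose vertices are identified with one port leaf each of $K(x_i)$, $K(x_j)$, $K(x_k)$. The one novelty relative to the $C_5$ used for 1-in-3 is that the clause cycle, its length, and the adjacency pattern of its three port vertices are chosen so that the cycle's remaining edges admit a decomposition into $P_4$'s and $K_{1,3}$'s respecting $M$ \emph{exactly} when the three incident claws are neither all intact nor all broken up; this is what encodes the not-all-equal condition. The resulting $G$ is cubic and $K_3$-free by construction, and the construction is clearly polynomial.

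The analysis rests on the same dichotomy that drives \Cref{lemma:bridge-must-be-middle-edge-of-P4} and the proof of \Cref{thm:decomposition-with-marked-edges-is-hard}: in any valid decomposition $D$, a claw all of whose edges are marked can be treated in only two ways — either it appears as a piece of $D$ (the claw is \emph{kept}, and we read the variable as {\sc false}) or each of its edges is covered by a piece hinged at the opposite leaf, so that the claw is \emph{dissolved} (we read the variable as {\sc true}). Indeed, since no marked edge may be a middle edge of a $P_4$, the center of such a claw can never be an internal vertex of a $P_4$ and can be the center of a $K_{1,3}$ only by using all three of its (marked) edges; this forces those two possibilities and no others, and the cyclic chaining propagates the choice to every port of $x_i$. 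For the forward direction, starting from a not-all-equal satisfying assignment $f$, I would keep the claws of the {\sc false} variables as pieces of $D$, dissolve the claws of the {\sc true} variables into $P_4$'s (and $K_{1,3}$'s) that reach one edge into the relevant clause gadgets, and then complete each clause gadget: since $f$ leaves no clause with all three literals equal, the pattern of kept/dissolved claws at each clause cycle is one of the six mixed patterns, and for each of them — splitting further on whether two port vertices are adjacent on the cycle — the leftover edges assemble into one $P_4$ (possibly together with a $K_{1,3}$), as drawn in the appendix figures.

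For the reverse direction, I would take an arbitrary valid decomposition $D$ of $(G,M)$; by the dichotomy above it assigns every variable a truth value, so it suffices to prove the structural property that \emph{for every clause the three incident claws are not all kept and not all dissolved}. If all three were kept, the edges surviving on that clause cycle form a fixed small graph that cannot be partitioned into connected size-three pieces — either its number of edges is not a multiple of three, or any candidate $P_4$ would be forced to use a marked edge as its middle edge, or it would strand an uncoverable $P_3$ — and the symmetric argument rules out ``all three dissolved''. Hence the rule ``$x_i$ kept $\mapsto$ {\sc false}, $x_i$ dissolved $\mapsto$ {\sc true}'' yields a not-all-equal satisfying assignment for $\phi$, which completes the reduction and the proof of \NP-completeness.

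The step I expect to be the main obstacle is the clause-gadget design. In the 1-in-3 reduction the clause cycle only had to forbid ``all false'' together with ``at least two true'', whereas here the gadget must display the more symmetric behaviour of forbidding both ``all false'' and ``all true'' while admitting every one-true and every two-true pattern, and it must do so as a cubic, $K_3$-free gadget that is moreover compatible with the marked-edge rule. Choosing the correct cycle length and port-adjacency pattern, and then verifying in each of the handful of cases that the leftover edges are decomposable precisely when they should be, is the delicate part; designing the $K_3$-free, degree-preserving chain of marked claws that implements a variable of arbitrary occurrence — so that all of its ports are forced to a single common state — is a secondary, more routine but still fiddly, ingredient.
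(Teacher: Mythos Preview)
Your high-level plan --- reduce from {\sc \mononaethreesat}, give each variable a fully marked gadget with exactly two global states (kept/dissolved), and design a clause gadget that is decomposable precisely for the mixed patterns --- is exactly the paper's, and your dichotomy argument for a single marked claw is correct. The gap is the clause gadget, and it is a genuine obstruction rather than a detail to be filled in.

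With one port per variable, a bare cycle cannot work. First, the only cycle all of whose vertices are ports is $C_3=K_3$, which is forbidden here; any longer cycle leaves the non-port vertices with degree~$2$, so $G$ is not cubic. Second, and more decisively, there is a parity obstruction. Since every edge at a claw centre is marked, that centre can never be an internal vertex of a $P_4$; hence when a claw is dissolved its centre is an endpoint of three distinct pieces, one per incident clause, and the pieces covering a given clause's cycle edges cover exactly those $L$ edges together with one port edge per dissolved variable and nothing outside that clause. The four patterns thus leave $L$, $L{+}1$, $L{+}2$, $L{+}3$ edges to decompose on the clause side, and no $L$ makes both $L{+}1$ and $L{+}2$ multiples of~$3$ --- so at least one mixed pattern is undecomposable no matter how you place further marked edges on the cycle. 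The paper avoids both issues by abandoning the one-port cycle: its clause gadget is a path $P_7$ with two of its edges marked, and its variable gadget is a tree $T(x_i)$ with $3k$ leaves (for a variable with $k$ occurrences), so that \emph{three} border edges of each variable reach each clause it occurs in. This makes all seven path vertices cubic and all four edge counts multiples of~$3$; the two marked path edges then exclude the all-true and all-false patterns on structural grounds, and the tree's internal vertices admit exactly two $K_{1,3}$-coverings (one covering all of $T(x_i)$, the other only its internal edges), which supplies the global two-state behaviour you sought from your ``cyclic chain of marked claws''.
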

\begin{proof}
 Given an instance $\phi$ of {\sc \mononaethreesat}, we build an instance $(G,M)$ of {\sc \newDecomposition} as follows:
\begin{enumerate}
    \item 
 For each variable $x_i$ with $k$ occurrences (we can assume $k\geq 2$), we create a tree $T(x_i)$ with $3k$ leaves, called a \emph{variable tree}, whose edges are all marked (see \Cref{fig:varInNewDecomposition}$(a)$). Edges incident to a leaf are called \emph{border edges}, the others are called \emph{internal edges}. 
 
\item For each clause $x_i\vee x_j \vee x_k$, we create a $P_7$ $(v_1,v_2,\ldots, v_7)$, called a \emph{clause path}, with marked edges $\{v_2,v_3\}$ and $\{v_5,v_6\}$, to which we join three border edges of each variable tree as follows (see \Cref{fig:clauseInNewDecomposition}): 
\begin{itemize}
 \item[$\bullet$] one leaf of $T(x_1)$ is joined to $v_1$, another to $v_3$, and another to $v_7$;
 \item[$\bullet$] one leaf of $T(x_2)$ is joined to $v_1$, another to $v_5$, and another to $v_7$;
 \item[$\bullet$] one leaf of $T(x_2)$ is joined to $v_2$, another to $v_4$, and another to $v_6$.
 \end{itemize}
 \end{enumerate}

 Note that the resulting graph is indeed cubic: each inner vertex of a variable tree has degree 3, and each leaf is also part of a clause path. Furthermore, the inner vertices of a clause path are adjacent to two other vertices in the path and one other vertex in a variable tree, and each endpoint of a clause path is adjacent to another vertex in the path and two vertices in different variable trees. 
 
 We further observe that $G$ is $K_3$-free. First, any cycle included only in variable trees has length at least 4 (since it must be included in at least 2 such tree, and in each tree the path between any pair of leaves has length at least 2). Therefore, a $K_3$ would have to use 1 or 2 edges in a clause path. If it uses only one edge $\{v_i, v_{i+1}\}$, then both $v_i$ and $v_{i+1}$ must be joined to leaves of the same variable tree, which is impossible since each clause consists of three different variables. If two edges are used, then the last edge of the $K_3$ would be joining two leaves in some variable tree, which is also impossible. Therefore, $G$ is $K_3$-free.

\begin{figure}[tb]
\newcommand{\varTree}{

    \foreach \x [count=\i] in {-1,1,4}{
      \node[vertex,draw] (a\i) at (\x,2) {};
    }
    \foreach \x [count=\i] in {-2,-1,0,1,2,4,5}{
      \node[vertex,draw] (b\i) at (\x,1) {};
    }
    \foreach \x [count=\i] in {-2.25,-1.75,-1.25,-0.75,0,0.75,1.25,2,2,3.75,4.25,4.75,5.25}{
      \node[vertex,draw] (c\i) at (\x,0) {};
    }
    \node (x1) at (2.4,1.4) {};
    \node (x2) at (3.6,1.6) {};
    
    \foreach \i/\j in {1/1,1/2,1/3,2/3,2/4,2/5,3/6,3/7} {
      \draw[densely dotted](a\i)--(b\j);
    }
    
    \draw[densely dotted] (b5)--(x1);
    \draw[densely dotted] (a3)--(x2);
    \node at (3,1.5){\ldots};
    
    \foreach \i/\j in {1/1,1/2,2/3,2/4,3/5,4/6,4/7,5/8,6/10,6/11,7/12,7/13} {
      \draw[densely dotted](b\i)--(c\j);
    }
}

\centering
\begin{tikzpicture}
    \draw[cap=round, line width=5mm, gray!20] (-2.5,2) -- (7.5,2);
    \draw[cap=round, line width=5mm, gray!20] (-2.5,1) -- (7.5,1);
    \draw[cap=round, line width=5mm, gray!20] (-2.5,0) -- (7.5,0);
    \node at (7,2) {$A$};
    \node at (7,1) {$B$};
    \node at (7,0) {leaves};
    \varTree
\end{tikzpicture}
\\
$(a)$
\\
\begin{tikzpicture}
 
    \varTree
    
    \begin{scope}[every path/.style=part]
    \draw[red] (a1) to (b1);
    \draw[red] (b1) to (c1);
    \draw[red] (b1) to (c2);
    \draw[green] (a1) to (b2);
    \draw[green] (b2) to (c3);
    \draw[green] (b2) to (c4);
    \draw[blue] (a1) to (b3);
    \draw[blue] (b3) to (a2);
    \draw[blue] (b3) to (c5);
    \draw[green] (a2) to (b4);
    \draw[green] (b4) to (c6);
    \draw[green] (b4) to (c7);
    \draw[red] (a2) to (b5);
    \draw[red] (b5) to (c8);
    \draw[red] (b5) to (x1);    
    \draw[red] (x2) to (a3);
    
    \draw[green] (a3) to (b6);
    \draw[green] (b6) to (c10);
    \draw[green] (b6) to (c11);
    \draw[blue] (a3) to (b7);
    \draw[blue] (b7) to (c12);
    \draw[blue] (b7) to (c13);

    \end{scope}
\end{tikzpicture}
\\ $(b)$ \\
\begin{tikzpicture}

    \varTree
    \begin{scope}[every path/.style=part]
    \draw[red] (a1) to (b1);
    \draw[red] (a1) to (b2);
    \draw[red] (a1) to (b3);
    \draw[green] (a2) to (b3);
    \draw[green] (a2) to (b4);
    \draw[green] (a2) to (b5);
    \draw[blue] (b5) to (x1);
    \draw[blue] (a3) to (x2);
    \draw[blue] (a3) to (b6);
    \draw[blue] (a3) to (b7);
    \end{scope}

\end{tikzpicture}
\\
$(c)$
\caption{$(a)$ A variable tree $T(x_i)$ for a variable $x_i$ with $k$ occurrences: all its edges are marked, and it has $3k$ leaves. Internal vertices are partitioned into two sets $A$ and $B$.
$(b)$ A decomposition of $T(x_i)$, corresponding to $x_i=\mbox{{\sc true}}$. $(c)$ A decomposition of the internal edges of $T(x_i)$, corresponding to $x_i=\mbox{{\sc false}}$. }
\label{fig:varInNewDecomposition}
\end{figure}
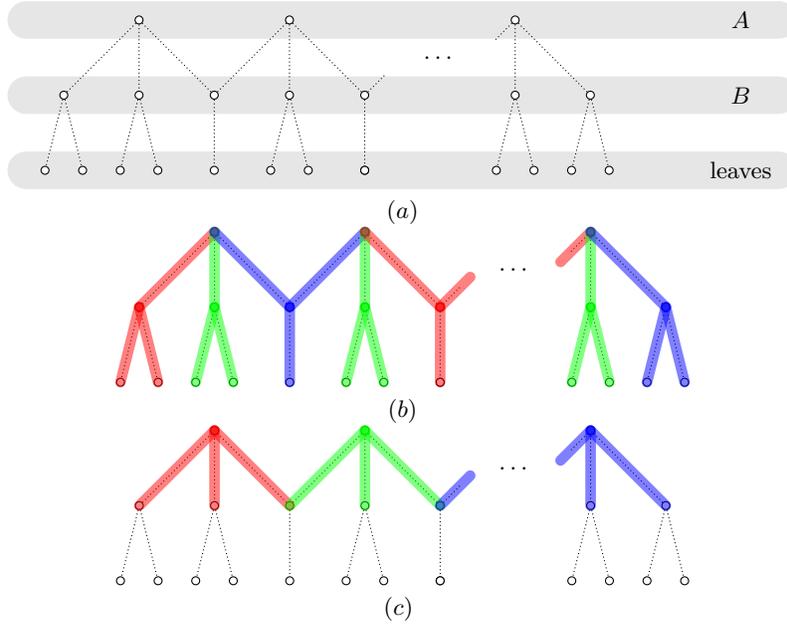

\begin{figure}
\centering
\newcommand{\clausePath}{
    \foreach \x/\y [count=\i] in {-3/1,-2/0,-1/1,0/0,1/1,2/0,3/1}{
      \node[vertex,draw, label=below:$v_\i$] (a\i) at (\x,\y) {};
    }
    \draw (a1)--(a2);
    \draw[densely dotted] (a2)--(a3);
    \draw (a3)--(a4);
    \draw (a4)--(a5);
    \draw[densely dotted] (a5)--(a6);
    \draw (a6)--(a7);
    
    }
    
\newcommand{\coverableTs}{
    \node[circle,fill=white] (T1) at (-2,3) {$T(x_i)$};
    \node[circle,fill=white] (T2) at (2,3) {$T(x_j)$};
    \node[circle,fill=white] (T3) at (0,-1.5) {$T(x_k)$};
    \foreach \a/\t in {1/1,1/2,2/3,3/1,4/3,5/2,6/3,7/1,7/2} {
      \draw[densely dotted] (T\t) -- (a\a);
    }
    
 }
    
\begin{tikzpicture}
\clausePath
    \node[circle, fill=gray!30] (T1) at (-2,3) {$T(x_i)$};
    \node[circle, fill=gray!30] (T2) at (2,3) {$T(x_j)$};
    \node[circle, fill=gray!30] (T3) at (0,-1.5) {$T(x_k)$};
    \foreach \a/\t in {1/1,1/2,2/3,3/1,4/3,5/2,6/3,7/1,7/2} {
      \draw[gray,very thick,line width=2pt,opacity=.3] (T\t) -- (a\a);
    }
\end{tikzpicture}
\\$(a)$
\begin{tabular}{cc}

\begin{tikzpicture}
\clausePath
\coverableTs

    \begin{scope}[every path/.style=part]
    \draw[red] (a2) to (a1);
    \draw[red] (a2) to (a3);
    \draw[red] (a2) to (T3);
    \draw[green] (a4) to (a3);
    \draw[green] (a4) to (a5);
    \draw[green] (a4) to (T3);    
    \draw[blue] (a6) to (a5);
    \draw[blue] (a6) to (a7);
    \draw[blue] (a6) to (T3);
    \end{scope}
\coverableTs
\end{tikzpicture}
&
\begin{tikzpicture}
\clausePath
\coverableTs
    \begin{scope}[every path/.style=part]
    \draw[red] (a2) to (a1);
    \draw[red] (a2) to (a3);
    \draw[red] (a1) to (T2);
    \draw[green] (a4) to (a3);
    \draw[green] (a4) to (a5);
    \draw[green] (a5) to (T2);    
    \draw[blue] (a6) to (a5);
    \draw[blue] (a6) to (a7);
    \draw[blue] (a7) to (T2);
    \end{scope}

\coverableTs
\end{tikzpicture}
\\
$(b)$ $x_i=\mbox{\sc true}$, $x_j=\mbox{\sc true}$, $x_k=\mbox{\sc false}$
&
$(c)$ $x_i=\mbox{\sc true}$, $x_j=\mbox{\sc false}$, $x_k=\mbox{\sc true}$\\
\begin{tikzpicture}
\clausePath
\coverableTs
    \begin{scope}[every path/.style=part]
    \draw[red] (T1) to (a1);
    \draw[red] (a1) to (a2);
    \draw[red] (a2) to (T3);
    \draw[green] (a3) to (a2);
    \draw[green] (a3) to (a4);
    \draw[green] (a3) to (T1);    
    \draw[blue] (T3) to (a4);
    \draw[blue] (a4) to (a5);
    \draw[blue] (a5) to (a6);
    \draw[green] (T3) to (a6);
    \draw[green] (a6) to (a7);
    \draw[green] (a7) to (T1);        
    \end{scope}
    \coverableTs
\end{tikzpicture}
&
\begin{tikzpicture}
\clausePath
\coverableTs
    \begin{scope}[every path/.style=part]
    \draw[red] (a2) to (a1);
    \draw[red] (a1) to (T1);
    \draw[red] (a1) to (T2);
    \draw[green] (a3) to (a2);
    \draw[green] (a3) to (T1);
    \draw[green] (a3) to (a4);    
    \draw[blue] (a5) to (a4);
    \draw[blue] (a5) to (a6);
    \draw[blue] (a5) to (T2);
    \draw[green] (a7) to (a6);
    \draw[green] (a7) to (T1);
    \draw[green] (a7) to (T2);    
    \end{scope}
    \coverableTs
\end{tikzpicture}
\\
$(d)$ $x_i=\mbox{\sc false}$, $x_j=\mbox{\sc true}$, $x_k=\mbox{\sc false}$
& 
$(e)$ $x_i=\mbox{\sc false}$, $x_j=\mbox{\sc false}$, $x_k=\mbox{\sc true}$
\end{tabular}

\caption{$(a)$ A clause path and its connections to the variable trees. $(b-e)$ For each truth assignment of $x_i$, $x_j$, $x_k$ (up to symmetry), a decomposition of the path edges and the neighboring uncovered edges of variable trees for false variables.
}
\label{fig:clauseInNewDecomposition}
\end{figure}
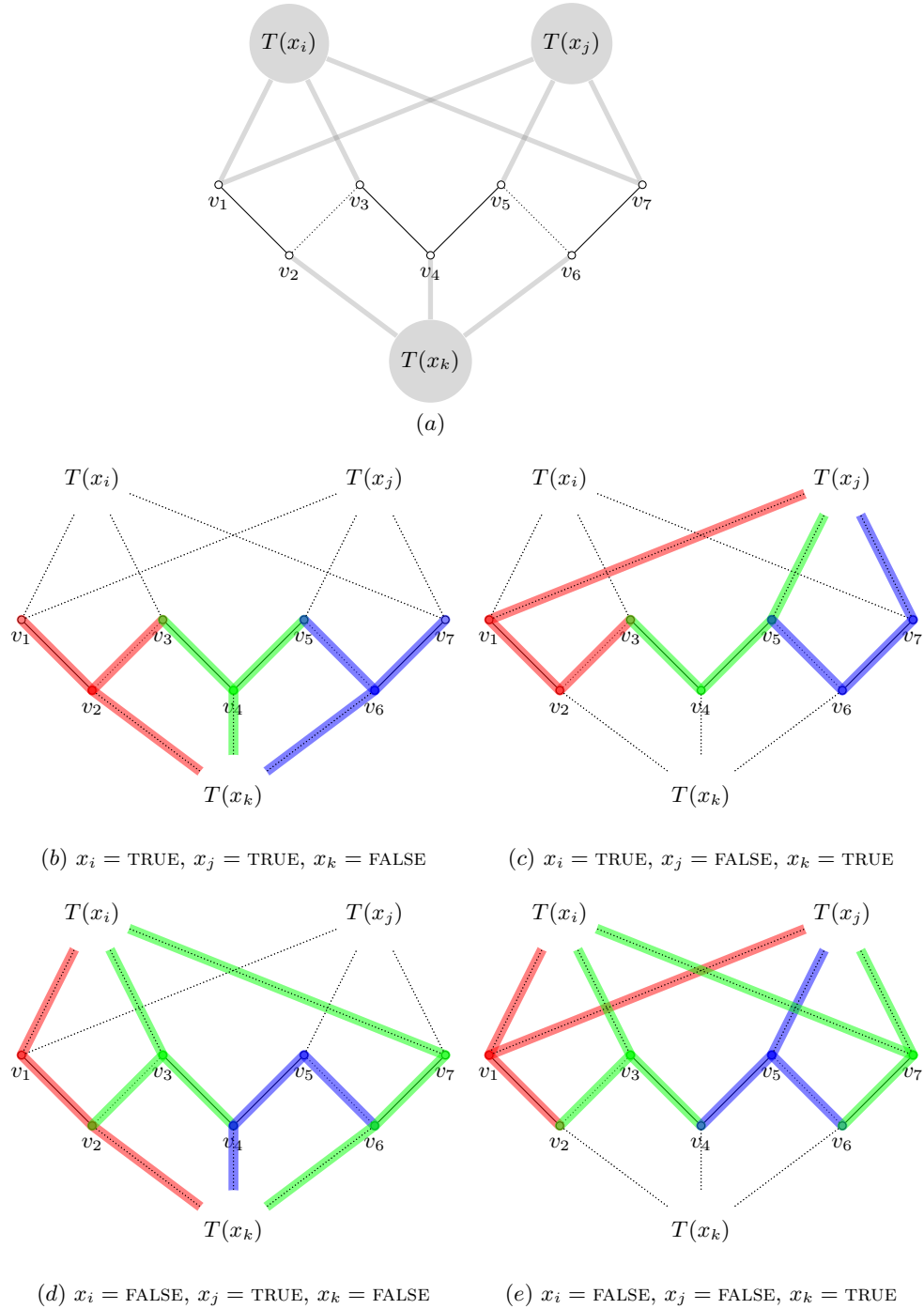
 
 We now prove that $\phi$ is satisfiable iff $(G,M)$ admits a decomposition.
 \begin{enumerate}
 \item [$\Rightarrow$:]
    From a satisfying truth assignment, we create a {\clawpathdecomp} of $(G,M)$ as described in Figures~\ref{fig:varInNewDecomposition}$(b-c)$ and~\ref{fig:clauseInNewDecomposition}$(b-e)$.
    Specifically, for each variable $x_i$ with $k$ occurrences, if $x_i=\mbox{\sc true}$, then we cover all edges of $T(x_i)$ with $2k-1$ $K_{1,3}$'s (Figure~\ref{fig:varInNewDecomposition}$(b)$). If $x_i=\mbox{\sc false}$, then we cover all internal edges of $T(x_i)$ with $k-1$ $K_{1,3}$'s(Figure~\ref{fig:varInNewDecomposition}$(c)$). 
    
    Now for any clause $x_i\vee x_j\vee x_k$, at least 1 and at most 2 variables among $\{x_i,x_j,x_k\}$ are set to {\sc false}. For these two variables and the corresponding variable trees, the border edges are still uncovered (for variables assigned {\sc true}, border edges are covered with the rest of the variable tree). Each tree has 3 border edges coming to the clause path, so there are either 9 or 12 edges to cover (6 in the clause path and 3 or 6 in border variable trees). As shown in Figures~\ref{fig:varInNewDecomposition}$(b-e)$, there always exist a decomposition of these edges into $K_{1,3}$'s and $P_4$'s (with the constraint that no marked edge is the middle of a $P_4$).    
    
    Overall, all edges in all variable trees and clause paths are covered by a $K_{1,3}$ or a $P_4$, therefore $(G,M)$ admits a {\clawpathdecomp}.    
    
 \item [$\Leftarrow$:]
   We first consider any variable tree, and show that the decompositions used for {\sc true} and {\sc false} assignments above are in fact the only two possible decompositions of the internal edges of this tree. Indeed, consider internal vertices and their partition into $A$ and $B$ (see \Cref{fig:varInNewDecomposition}$(a)$): because of marked edges, all edges between any two internal vertices must be part of a $K_{1,3}$, linking a leaf to a center. Since internal vertices form a path, they must alternate along this path between leaves and centers of $K_{1,3}$'s, so either all vertices of $B$ are centers, or all vertices of $A$ are centers. Each case yields only one possible decomposition of adjacent edges, as described respectively in Figures~\ref{fig:varInNewDecomposition}$(b)$ and \ref{fig:varInNewDecomposition}$(c)$. Naturally, we assign {\sc true} to any variable whose tree is decomposed as in the first case, and {\sc false} to other variables. We further make the following observation for the {\sc false} case: consider any leaf $x$ of the tree, and its parent $y$. Due to marked edges, $x$ can only be the center of a $K_{1,3}$, or a middle node of a $P_4$, of which $y$ is an endpoint.
   
   We now consider a clause $x_i\vee x_j\vee x_k$, and show that its variables can neither be all set to {\sc true} nor all set to  {\sc false}. Aiming at a contradiction, assume first that all variables are set to {\sc true}. Then all border edges of their trees are already covered, and the $K_{1,3}$'s and $P_4$'s covering the clause path may only use the 6 edges of the path. The only possibility to decompose the path in such a way is to use two $P_4$'s, however, such $P_4$'s would have marked edges as middle edges, which is forbidden.  We now assume that all variables of a clause are set to {\sc false}, i.e. it remains to cover the clause path and all border edges of the variable trees. Thanks to the  observation made on leaves of the tree in the {\sc false} case, vertices $v_1$ and $v_2$ are either centers of $K_{1,3}$'s, or middle nodes in $P_4$'s. They cannot both be centers of $K_{1,3}$'s, and due to marked edges, they must both be middle nodes of the same $P_4$. However, as noted above, the parents of $v_1$ in both trees $T(x_i)$ and $T(x_j)$ should be endpoints of this $P_4$, which is impossible.
   
   Finally, each variable has been assigned a truth value, and for each clause there must be at least one {\sc true} and one {\sc false} variable: therefore, we have a satisfying assignment for our instance of {\sc \mononaethreesat}.

 \end{enumerate}
 \qed
\end{proof}

\end{document}